%% file: main.tex
\documentclass[journal,onecolumn]{IEEEtran}
\usepackage{amsmath,amssymb,amsthm}
\usepackage{enumitem}
\usepackage{standalone} 
\usepackage{bm}
\usepackage{graphics}
\usepackage{mathtools}

\newtheorem{theorem}{Theorem}
\newtheorem{lemma}[theorem]{Lemma}
\newtheorem{proposition}[theorem]{Proposition}
\newtheorem{corollary}[theorem]{Corollary}
\theoremstyle{definition}

\theoremstyle{remark}

\newcommand{\Dxy}{D({Q_{XY}})}
\newcommand{\Dxby}{D({Q_{X\bar{Y}}})}
\newcommand{\Dbxy}{D({Q_{\bar{X}Y}})}
\newcommand{\Dbxby}{D({Q_{\bar{X}\bar{Y}}})}
\newcommand{\Qxy}{Q_{XY}}
\newcommand{\Qxby}{Q_{X\bar{Y}}}
\newcommand{\Qbxy}{Q_{\bar{X}Y}}
\newcommand{\Qbxby}{Q_{\bar{X}\bar{Y}}}
\newcommand{\Pxy}{P_{XY}}
\newcommand{\Pxby}{P_{X\bar{Y}}}
\newcommand{\Pbxy}{P_{\bar{X}Y}}
\newcommand{\Pbxby}{P_{\bar{X}\bar{Y}}}
\newcommand{\xm}{\bm{x}}
\newcommand{\ym}{\bm{y}}
\newcommand{\um}{\bm{u}}
\newcommand{\vm}{\bm{v}}
\newcommand{\Xm}{\bm{X}}
\newcommand{\Ym}{\bm{Y}}
\newcommand{\Um}{\bm{U}}
\newcommand{\Vm}{\bm{V}}
\newcommand{\Zm}{\bm{Z}}
\newcommand{\fx}{f(\xm)}
\newcommand{\fy}{f(\ym)}

\newcommand{\tQxy}{Q_{\xm\ym}}
\newcommand{\tQxby}{Q_{\xm\bar{\ym}}}
\newcommand{\tQbxy}{Q_{\bar{\xm}\ym}}
\newcommand{\tQbxby}{Q_{\bar{\xm}\bar{\ym}}}
\newcommand{\tQxyd}{Q_{\xm'\ym'}}

\newcommand{\tDxy}{D(\tQxy)}
\newcommand{\tDxby}{D(\tQxby)}
\newcommand{\tDbxy}{D(\tQbxy)}
\newcommand{\tDbxby}{D(\tQbxby)}
\newcommand{\Dxyas}{D(\Qxy^*)}
\newcommand{\Dxbyas}{D(\Qxby^*)}
\newcommand{\Dbxyas}{D(\Qbxy^*)}
\newcommand{\Dbxbyas}{D(\Qbxby^*)}

\newcommand{\Dxbymin}{D^{X\bar{Y}}_\mathrm{min}}
\newcommand{\Dbxymin}{D^{\bar{X}Y}_\mathrm{min}}
\newcommand{\Dbxbymin}{D^{\bar{X}\bar{Y}}_\mathrm{min}}

\title{The Condition for Structured Coding to Improve Random Coding in the Binary Modulo-sum Problem}
\author{Yohsuke Tsujino and Shun Watanabe}
\date{\today}

\begin{document}
\maketitle
\begin{abstract}
The modulo-sum problem, proposed by Körner and Marton (KM), is a representative problem in the field of distributed source coding.
In the modulo-sum problem, two correlated sources are encoded separately, and the decoder decodes the modulo-sum of the sources.
It is clear that the Slepian-Wolf (SW) coding rate region is achievable for the modulo-sum problem.
Körner and Marton proved that the SW coding rate region can be improved by structured coding based on linear codes.
Ahlswede and Han (AH) proposed AH coding, which combines structured coding and random coding, and expressed its rate using auxiliary random variables.
However, it was conjectured that the minimum sum rate of AH coding cannot be smaller than the minimum of the sum rates achievable by KM coding or SW coding.
Subsequently, Kakishima and Watanabe considered a multi-letter extension of AH coding, and designed the auxiliary random variables by taking the XOR of adjacent bits of the source sequences.
Through numerical computation, they demonstrated the existence of source parameters for which multi-letter AH coding improves upon SW coding.
However, this confirmation remained numerical, and the conditions under which multi-letter extended AH coding improves upon SW coding have not been analytically characterized.

In this study, we analytically characterize the conditions under which multi-letter extended AH coding improves upon SW coding.
Our condition is tight in the sense that it coincides with the complement of the known SW optimal sufficient condition.
To obtain the conditions, we apply the method of types to reduce the evaluation of the multi-letter expression to a comparison of single-letter divergences, which might be of independent interest.

\end{abstract}
\input{sections/introduction.tex}

\input{sections/preparations.tex}
\input{sections/main_result.tex}
\input{sections/conclusion.tex}
\input{sections/appendix.tex}
\input{sections/acknowledgment.tex}

\bibliographystyle{abbrv}
\bibliography{absref.bib}
    
\end{document}

%% file: sections/introduction.tex
\section{Introduction}
The study of source coding with correlated sources was initiated by 
Slepian and Wolf \cite{slepian:73}. Even though two sources are encoded separately, their result means
the sum of the coding rates can be reduced to the joint entropy of the two sources, which is the fundamental 
limit of the source coding when both the sources are encoded jointly. 
As an extension to lossy source coding, Wyner and Ziv solved the case where one source is available as 
side-information at the decoder \cite{wyner:76}. 
Yamamoto subsequently extended this result to a setting in which the decoder reconstructs a function of the source observed at the encoder and the side information \cite{yamamoto:82}.
For the case in which both sources are encoded, the setting where only one source is reconstructed subject to a distortion constraint was studied by Berger, Housewright, Omura, Tung, and Wolfowitz \cite{BergerHousewright:79}.
For the setting where both sources are reconstructed subject to distortion constraints, the region now known as the Berger–Tung region was derived by Berger \cite{berger:78}, Tung \cite{tung:78}, and Housewright \cite{Housewright:77}.
Furthermore, Yamamoto and Itoh extended the Berger–Tung type region to the problem in which the decoder reconstructs a remote source correlated with the sources observed by the encoders \cite{YamamotoItoh:1980}.
On the other hand, K\"orner and Marton demonstrated that Berger-Tung type coding is not optimal for the 
modulo-sum problem \cite{KM}, in which the decoder is required to reproduce only the modulo-sum of two binary sources. 
This result suggests that the achievable region cannot be fully characterized by standard random coding alone,
and that structured coding must be considered as well.

On the other hand, there are also cases such that Berger-Tung type coding is known
to be optimal. For instance, Gel'fand and Pinsker showed that, when 
the sources satisfy a certain conditional independence condition, Berger-Tung type coding yields the 
optimal achievable rate region \cite{GelPin:79}. This line of research has led to many subsequent developments, including
results on the CEO problem \cite{berger:96,viswanathan:97,oohama:98,oohama:05,PraTseRam:04,wagner:08,wagner:08b,Oohama:06b,oohama:09b,TavVisWag:10,SefTch:16}. 
Regarding the aforementioned modulo-sum problem,
it is known that the Slepian-Wolf region is optimal when the entropy of the function value is at least as large
as the entropy of the sources \cite[Ex 16.23]{Csiszár_Körner_2011}. This condition is equivalent to a degradation-type order between the source
and the function value. For modulo-sum over alphabets of size larger than two, Nair and Wang also showed that
the Slepian-Wolf region is optimal under a broader condition between the source and the function value \cite{Nair}.

There have been attempts to apply the K\"orner-Marton type structured coding to various other network
information problems \cite{NazGas:07,PhiZamEreKhi:11,PraPadShi:monograph}. Nevertheless, compared with standard random coding, our understanding
of structured coding remains immature. In particular, it is difficult to clearly distinguish the situation
in which random coding is sufficient from those in which structured coding is necessary.

In the problem of distributed function computation, two correlated sources are
encoded separately, and the decoder is required to reconstruct a function of the two sources.
Since the decoder can compute the function value if it can reproduce the entire sources,
the Slepian-Wolf achievable region provides a trivial inner bound for function computation.
Han and Kobayashi derived the necessary and sufficient condition on functions such that
this trivial inner bound is optimal for every source satisfying positivity condition \cite{HK}. 
This line of work was further extended to non i.i.d. sources in \cite{KuzWat15,KuzWat16,Wat20}.
The modulo-sum function does not satisfy the Han-Kobayashi condition, and is therefore a typical example
of a function for which the Slepian-Wolf region can be improved. 
 However, although the result by Han and Kobayashi guarantees that, for any function not satisfying their condition,
 there exists at least one source for which the Slepian-Wolf region can be improved, it does not determine whether 
 such an improvement is possible for a given source. In particular, for the modulo-sum function,
 characterizing whether the Slepian-Wolf region can be improved for a given source remains an open problem. 
 
 Recently, goal-oriented communication has attracted considerable attention as a next generation
 communication architecture; e.g.,~see \cite{Gun+23}. Unlike conventional communication schemes, which
 primarily focus on delivering all the data to the receiver, goal-oriented communication
 aims to improve communication efficiency by transmitting only the minimum amount of information
 necessary for the receiver to accomplish a desired task. The distributed function computation problem
 discussed above is a typical example of goal-oriented communication. However, if the Slepian-Wolf region
 cannot be improved, then that means that no gain from goal-oriented communication can be obtained. 
 Therefore, determining whether the Slepian-Wolf region can be improved for a given source
 is an important problem for asserting the usefulness of goal-oriented communication.
 
 When the sources follow a symmetric distribution known as the binary doubly symmetric source (BDSS),
 the K\"orner-Marton coding scheme improves upon the Slepian-Wolf region; in fact, the K\"orner-Marton coding is optimal
 in this case. However, when the distribution deviates from the BDSS, the achievable rate region of the K\"orner-Marton coding
 scheme can even be worse than that of Slepian-Wolf coding. Ahlswede and Han proposed a coding scheme that combines 
 Berger-Tung type random coding and K\"orner-Marton type structured coding in a hybrid manner, and derived an achievable region \cite{AH}.
 Although the Ahlswede-Han region can improve upon the convex hull of the Slepian-Wolf region and the K\"orner-Marton region,
 it has been conjectured by Sefidgaran, Gohari, and Aref \cite{conjecture}
 that the Ahlswede-Han sum rate cannot improve upon the minimum of the Slepian-Wolf sum rate and the K\"orner-Marton sum rate.

In the search for coding schemes that can outperform the sum rate of the Slepian-Wolf region
for distributions such that K\"orner-Marton coding is ineffective, Kakishima and Watanabe numerically demonstrated
that multi-letter Ahlswede-Han coding can improve upon the sum rate of the Slepian-Wolf region for certain source
parameters by choosing the auxiliary random variables as XORs of adjacent source bits \cite{KW}.
In fact, outside the parameter region where the optimality of the Slepian-Wolf region mentioned above is known, 
their numerical experiments suggested a tendency that improvements can be obtained by
increasing the multi-letter block length. However, these findings were supported only by numerical evidence 
and were not accompanied by an analytical proof.

In this work, we analytically characterize the conditions under which multi-letter Ahlswede-Han coding improves upon Slepian-Wolf coding.
Our condition is tight in the sense that it coincides with the complement of the source parameter region where the optimality of Slepian-Wolf coding is known, 
as stated in \cite[Ex. 16.23]{Csiszár_Körner_2011}. 
Thus, the known sufficient condition for the optimality of Slepian-Wolf coding is also necessary.

In our analysis, 
we choose the auxiliary random variables as the XORs of adjacent source bits, 
following the construction in \cite{KW}. The numerical results of Kakishima and Watanabe suggest that, 
as the source block length increases, the parameter region where improvement is observed becomes larger. 
As the block length increases, the sum rate of the multi-letter Ahlswede--Han scheme converges to the Slepian--Wolf sum rate. 
Hence, the key question is not the value of the limit itself, but the side from which the sum rate approaches this limit. 
If the convergence occurs from below, then the multi-letter Ahlswede--Han scheme strictly improves upon the Slepian--Wolf sum rate for sufficiently large block length. Motivated by this observation, we characterize the source distributions for which this convergence occurs from below.
However, directly evaluating this asymptotic behavior is not straightforward. 
To derive the condition, we apply the method of types and reduce the evaluation of the multi-letter expression to a comparison of divergence exponents. 
More specifically, the method of types allows us to identify the dominant terms, and the improvement condition is obtained by comparing the relevant divergences. 
This reduction may be of independent interest.

%% file: sections/preparations.tex
\section{Preparations}
In this paper, we follow the notation from \cite{Csiszár_Körner_2011}.
For example, the entropy of $X$ is denoted by $H(X)$, 
the conditional entropy is denoted by $H(X|Y)$, 
the mutual information is denoted by $I(X\wedge Y)$, 
the binary entropy function is denoted by $h(\cdot)$,
and etc.

The modulo-sum problem, which was first proposed in \cite{KM}, is a distributed source coding problem  where two correlated sources $X$ and $Y$ are encoded separately, and the decoder reproduces the modulo-sum $Z$ of the sources.

Let $(X,Y)$ be a pair of correlated binary sources with joint distribution $P_{XY}$:
\begin{align*}
    P_{XY} = \begin{bmatrix}P_{XY}(0,0) & P_{XY}(0,1) \\ P_{XY}(1,0) & P_{XY}(1,1)\end{bmatrix}=\begin{bmatrix} \theta_{00}& \theta_{01} \\ \theta_{10} & \theta_{11}\end{bmatrix}, \quad \sum_{i,j} \theta_{ij}=1, \quad \theta_{ij}\geq 0.
\end{align*}
Let $P_{X\bar Y}$, $P_{\bar X Y}$, and $P_{\bar X\bar Y}$
denote the distributions obtained from $P_{XY}$ by flipping $Y$, $X$, and both, respectively.

Let $(X^n,Y^n)=((X_1,Y_1),\ldots,(X_n,Y_n))$ be an i.i.d. source generated according to $P_{XY}$, and let
\begin{align*}
    Z^n=(Z_1,\ldots,Z_n), \qquad Z_i=X_i\oplus Y_i
\end{align*}
for $i=1,\ldots,n$.

A code for computing the modulo-sum consists of two encoders
\begin{align*}
    f_n:\{0,1\}^n\to \mathcal{M}_{X,n},
    \qquad
    g_n:\{0,1\}^n\to \mathcal{M}_{Y,n},
\end{align*}
and a decoder
\begin{align*}
    \psi_n:\mathcal{M}_{X,n}\times \mathcal{M}_{Y,n}
    \to \{0,1\}^n .
\end{align*}
The probability of error is defined as
\begin{align*}
    P_{e,n}
    =
    \Pr\left\{
        \psi_n(f_n(X^n),g_n(Y^n))\neq Z^n
    \right\}.
\end{align*}

A rate pair $(R_X,R_Y)$ is said to be achievable if there exists
a sequence of codes
\begin{align*}
    \{(f_n,g_n,\psi_n)\}_{n=1}^{\infty}
\end{align*}
such that
\begin{align*}
    \limsup_{n\to\infty}\frac{1}{n}\log|\mathcal{M}_{X,n}|
    \leq R_X,
\end{align*}
\begin{align*}
    \limsup_{n\to\infty}\frac{1}{n}\log|\mathcal{M}_{Y,n}|
    \leq R_Y,
\end{align*}
and
\begin{align*}
    \lim_{n\to\infty}P_{e,n}=0.
\end{align*}
In this paper, logarithms are to the base 2.

Since the decoder can compute $Z$ from $X$ and $Y$, the Slepian-Wolf (SW) rate region is a trivial achievable rate region for the modulo-sum problem, i.e., a rate pair $(R_X,R_Y)$ satisfying
\begin{equation}
    \begin{aligned}\label{rate:SW}
            R_X &\geq H(X|Y),\\
            R_Y &\geq H(Y|X),\\
            R_X + R_Y &\geq H(X,Y)
        \end{aligned}
\end{equation}        
is achievable.
We denote the optimal sum rate of SW coding by $R_{\mathrm{SW}}=H(X,Y)$.

While SW coding does not consider any structure of the function to be computed, there are known cases where SW coding is optimal \cite[Ex 16.23]{Csiszár_Körner_2011}.
\begin{proposition}
    SW rate region (\ref{rate:SW}) characterizes the achievable region of modulo-sum problem if
    \begin{equation}
        H(Z) \geq\min [H(X),H(Y)].
    \end{equation}
\end{proposition}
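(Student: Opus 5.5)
Achievability is immediate: the decoder recovers $(X^n,Y^n)$ and then computes $Z^n$, so every rate pair satisfying (\ref{rate:SW}) is achievable. The work is in the converse. We must show that any achievable $(R_X,R_Y)$ satisfies $R_X\ge H(X|Y)$, $R_Y\ge H(Y|X)$ and $R_X+R_Y\ge H(X,Y)$ whenever $H(Z)\ge \min[H(X),H(Y)]$. By symmetry we assume $H(Z)\ge H(X)$; the case $H(Z)\ge H(Y)$ is identical with the roles of $X$ and $Y$ swapped.

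\emph{Single-rate bounds.} These hold with no condition on the source. Give the decoder $Y^n$ as side information. Since $X^n=Z^n\oplus Y^n$, computing $Z^n$ is then equivalent to recovering $X^n$. The standard Fano argument gives
\begin{align*}
nR_X\ge H(f_n(X^n))\ge H(f_n(X^n)\mid Y^n)\ge I(X^n\wedge f_n(X^n)\mid Y^n)\ge nH(X|Y)-n\epsilon_n .
\end{align*}
Exchanging the roles of the two sources gives $R_Y\ge H(Y|X)$.

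\emph{Sum-rate bound.} This is the step that needs the hypothesis. Any good code lets the decoder reproduce $Z^n$, so $R_X+R_Y\ge H(Z)$ by the usual cut-set and Fano argument. This alone is too weak, because $H(X,Y)=H(X|Y)+H(Y)$ may exceed $H(Z)$. Instead, split the sum as
\begin{align*}
n(R_X+R_Y)\ge H(f_n)+H(g_n)\ge I(X^n\wedge f_n\mid Y^n)+I(Y^n\wedge g_n)+\ldots
\end{align*}
and try to show that $g_n(Y^n)$ must carry about $nH(Y)$ bits, since $H(X|Y)$ is already forced for $f_n$.

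The cleaner route is the degradedness interpretation noted in the introduction. For binary variables, $H(Z)\ge H(X)$ means $h(P(Z{=}1))\ge h(P(X{=}1))$, and the condition is equivalent to a degradation-type order between the source and the function value. I would use it to construct a binary symmetric channel, or more precisely a coupling, under which the pair $(Y,Z)$ dominates $(Y,X)$ in the sense needed for a Csisz\'ar-sum-type argument. Concretely, we expand
\begin{align*}
I(Z^n\wedge f_n,g_n)\le \sum_i\bigl[H(Z_i)-H(Z_i\mid U_i,V_i)\bigr]
\end{align*}
with auxiliaries $U_i=(f_n,Y^{i-1})$ and $V_i=g_n$, as in the Körner--Marton/Ahlswede--Han outer bound. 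We then apply Mrs.\ Gerber's lemma to $Z_i=X_i\oplus Y_i$ given the encoder messages; this is where $H(Z)\ge H(X)$ enters. The goal is to show that the optimal outer-bound sum rate equals $H(X|Y)+H(Y)$.

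\emph{Main obstacle.} The hardest part is turning the entropy inequality $H(Z)\ge H(X)$ into the conditional inequality needed for every auxiliary: $H(Z\mid U,V)\ge H(X\mid U,V)$, or the analogous degradedness statement. This plausibly requires the degraded-ordering equivalence from \cite[Ex.~16.23]{Csiszár_Körner_2011}, which the paper itself cites. For that equivalence I would first try to prove the binary case directly, by showing that $Z$ is a degraded version of $X$ given $Y$-side knowledge.
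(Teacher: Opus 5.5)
\textbf{There is a genuine gap: the sum-rate bound is not proved.} Your achievability argument and your two single-rate converses are fine. The sum-rate bound $R_X+R_Y\ge H(X,Y)$ is the entire content of the proposition, and for it you give a plan whose decisive step you leave open. Several of the plan's ingredients would not work as stated:
\begin{itemize}
\item The display $I(Z^n\wedge f_n,g_n)\le\sum_i[H(Z_i)-H(Z_i|U_i,V_i)]$ bounds a mutual information from above. On its own it gives no lower bound on $R_X+R_Y$.
\item Mrs.\ Gerber's lemma needs the additive noise to be independent of the input and of the conditioning. Here $Y_i$ is correlated with $X_i$, so the lemma does not apply directly.
\item The inequality $H(Z|U,V)\ge H(X|U,V)$ ``for every auxiliary'' is not a consequence of $H(Z)\ge H(X)$ once $U$ depends on $X$. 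Conditioning on $U=u$ reweights $P_X$ and thereby changes $P_{X|Y}$, and the reweighted source can violate the hypothesis. For example, $\theta=(0.45,0.45,0.09,0.01)$ satisfies $H(Z)\ge H(X)$, but reweighting to $P_X(0)=\tfrac12$ gives $H(Z)=h(0.3)<1=H(X)$.
\end{itemize}
The paper itself gives no proof of this proposition; it quotes it from the Csisz\'ar--K\"orner exercise.

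\textbf{The missing idea.} Apply degradedness only to the $Y$-encoder's message, in $n$-letter form, so that no single-letterization is needed. Put $\alpha_y=P(X=1\mid Y=y)$, $a=P(Y{=}0)(\alpha_0-\tfrac12)$ and $b=P(Y{=}1)(\alpha_1-\tfrac12)$. Then $P(X{=}1)-\tfrac12=a+b$ and $P(Z{=}1)-\tfrac12=a-b$. Hence $H(Z)\ge H(X)$ is equivalent to $(1-2\alpha_0)(1-2\alpha_1)\ge 0$, which is a condition on $P_{X|Y}$ alone. Under it, $P_{X|Y}=P_{Z|Y}W$ for a binary channel $W$. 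If $\alpha_0,\alpha_1\le\tfrac12$ and $\alpha_0+\alpha_1<1$, take $W(1|0)=\alpha_0(1-2\alpha_1)/(1-\alpha_0-\alpha_1)$ and $W(1|1)=\alpha_1(1-2\alpha_0)/(1-\alpha_0-\alpha_1)$. The remaining cases are trivial or follow by complementing $X$.

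For $B=g_n(Y^n)$, data processing then gives $I(X^n\wedge Y^n|B)\le I(Z^n\wedge Y^n|B)$. Since $H(X^n|Y^n,B)=H(Z^n|Y^n,B)=nH(X|Y)$, this is exactly $H(Z^n|B)\ge H(X^n|B)$. With $A=f_n(X^n)$ and Fano's bound $H(Z^n|A,B)\le n\epsilon_n$,
\begin{align*}
H(A)+H(B)&\ge I(A\wedge Z^n|B)+H(B)\ge H(X^n|B)+H(B)-n\epsilon_n\\
&=nH(X)+H(B|X^n)-n\epsilon_n\ge nH(X)+nH(Y|X)-2n\epsilon_n .
\end{align*}
The last step uses $H(B|X^n)\ge nH(Y|X)-H(Z^n|X^n,B)$ together with $H(Z^n|X^n,B)\le H(Z^n|A,B)$. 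The case $H(Z)\ge H(Y)$ follows by swapping the roles of $X$ and $Y$.

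Your instinct to use a coupling in which $(Y,Z)$ dominates $(Y,X)$ was right. The point is to apply it directly to $g_n(Y^n)$, not through Csisz\'ar-sum auxiliaries that mix in $f_n$.
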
 
    
Rewriting this condition in terms of $\theta_{ij}$, we have
\begin{equation}\label{cond:SW optimal parameter}
    (\theta_{00}-\theta_{10})(\theta_{01}-\theta_{11}) \geq 0 \quad \text{or} \quad (\theta_{00}-\theta_{01})(\theta_{10}-\theta_{11}) \geq 0.
\end{equation}

Since individual rates must satisfy 
\begin{align*}
    R_X &\geq H(X|Y),\\
    R_Y &\geq H(Y|X)
\end{align*}
even if either $X$ or $Y$ is directly observed at the decoder, 
the SW rate region is suboptimal if and only if the sum rate $R_{\mathrm{SW}}=H(X,Y)$ can be improved.

In \cite{KM}, Körner and Marton (KM) proposed a structured coding based on linear codes, which we call KM coding, and showed the following result:
\begin{proposition}
    A rate pair $(R_X,R_Y)$ satisfying 
    \begin{equation}        
        \begin{aligned}\label{rate:KM}
            R_X &\geq H(Z),\\
            R_Y &\geq H(Z),\\
            R_X + R_Y &\geq 2H(Z)
        \end{aligned}
    \end{equation}
    is achievable.
    The rate region is optimal for symmetric sources, i.e., $\theta_{00}=\theta_{11}$ and $\theta_{01}=\theta_{10}$.
\end{proposition}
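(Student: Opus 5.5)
We prove the two parts separately: achievability with a linear code and one common decoder, and optimality for symmetric sources by a converse on the sum rate.

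\textbf{Achievability.} Fix $R > H(Z)$. For each $n$, choose a parity-check matrix $A_n \in \{0,1\}^{k\times n}$ over $\mathrm{GF}(2)$ with $k = \lceil nR\rceil$, and set $f_n(\xm)=A_n\xm$ and $g_n(\ym)=A_n\ym$. Linearity gives $A_n\xm \oplus A_n\ym = A_n(\xm\oplus\ym)=A_n Z^n$. The decoder therefore holds the syndrome of $Z^n$ under $A_n$. It outputs the unique sequence $\bm{z}$ in the typical set $T^n_{[P_Z]_\delta}$ with $A_n\bm{z}=A_nZ^n$, and declares an error if there is none or more than one.

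An error occurs only if $Z^n$ is atypical, or if some other typical $\bm{z}'\neq Z^n$ has $A_n(\bm{z}'\oplus Z^n)=0$. Draw $A_n$ uniformly at random. For every fixed nonzero $\bm{e}$, $\Pr\{A_n\bm{e}=0\}=2^{-k}$. A union bound over at most $2^{n(H(Z)+\epsilon)}$ typical sequences bounds the expected error by $\Pr\{Z^n\notin T\}+2^{n(H(Z)+\epsilon)-k}$. This tends to $0$ because $R>H(Z)+\epsilon$ once $\epsilon$ is small. Some deterministic $A_n$ then does at least as well as the average.

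Both rates equal $R$, so every pair with $R_X,R_Y\ge H(Z)$ is achievable. The sum-rate constraint $R_X+R_Y\ge 2H(Z)$ is implied by the individual constraints, and the closure of the region handles $R=H(Z)$.

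\textbf{Optimality for symmetric sources.} Let $\theta_{00}=\theta_{11}=(1-p)/2$ and $\theta_{01}=\theta_{10}=p/2$. Then $X$ is uniform and independent of $Z$, and $Y=X\oplus Z$. It suffices to show that every achievable pair satisfies $R_X\ge H(Z)$ and $R_Y\ge H(Z)$; the sum bound follows from these.

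Give the decoder $Y^n$ as side information for free. It must then recover $Z^n$ from $(f_n(X^n),Y^n)$, which is equivalent to recovering $X^n$, since $X^n=Y^n\oplus Z^n$. By Fano's inequality,
\begin{align*}
nR_X \ge H(f_n(X^n)) \ge H(X^n|Y^n)-n\epsilon_n = nH(X|Y)-n\epsilon_n .
\end{align*}
Since $X$ is independent of $Z$ and uniform, $H(X|Y)=H(X,Y)-H(Y)=1+h(p)-1=h(p)=H(Z)$. Hence $R_X\ge H(Z)$, and the same argument with the roles of $X$ and $Y$ exchanged gives $R_Y\ge H(Z)$.

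The main point needing care is this converse. The key observation is that under symmetry $H(X|Y)=H(Z)$, so the trivial side-information bound already matches the Körner–Marton rate. The achievability part is the standard random linear-code argument, and its only subtle step is that the pairwise-independence property of random linear maps suffices for the union bound.
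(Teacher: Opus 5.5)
The paper gives no proof of this proposition; it simply cites K\"orner and Marton \cite{KM}. Your argument is correct and is essentially their standard proof. For achievability, a common random parity-check matrix hands the decoder the syndrome $A_nZ^n$, from which it decodes $Z^n$ at any rate above $H(Z)$. For the symmetric source, the converse is the side-information bound $R_X\ge H(X|Y)$, $R_Y\ge H(Y|X)$, which the paper itself invokes in its Preparations section, combined with $H(X|Y)=H(Z|Y)=H(Z)$ and $H(Y|X)=H(Z|X)=H(Z)$.

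Two small points of rigor remain. In the converse, write $\log|\mathcal{M}_{X,n}|\ge H(f_n(X^n))$ and then take $\limsup$, rather than starting from $nR_X$. Boundary pairs with a rate equal to $H(Z)$ need the usual diagonal argument, which is what your appeal to closure amounts to.
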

In \cite{Nair}, Nair and Wang showed a broader condition for the optimality of KM coding.
    
In \cite{AH}, Ahlswede and Han proposed AH coding which combines structured coding and random coding, and expressed its rate pair using auxiliary random variables.
\begin{proposition}
    A rate pair $(R_X,R_Y)$ satisfying
    \begin{equation}
        \begin{aligned}\label{rate:AH}
            R_X &\geq I(U \wedge X|V)+H(Z|U,V),\\
            R_Y &\geq I(V \wedge Y|U)+H(Z|U,V),\\
            R_X + R_Y &\geq I(X,Y\wedge U,V) + 2H(Z|U,V)
        \end{aligned}
    \end{equation}
    is achievable, where $U$ and $V$ are finite valued random variables satisfying the Markov chain $U-X-Y-V$.
\end{proposition}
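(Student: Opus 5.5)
The statement is the Ahlswede--Han achievability result: every rate pair in (\ref{rate:AH}) is achievable for any $U,V$ with $U-X-Y-V$. The plan is to combine a Berger--Tung type quantization of the sources into $(U^n,V^n)$ with a K\"orner--Marton linear code applied to $Z^n$. The linear code only needs to cover the residual uncertainty of $Z$ given $(U,V)$.

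\textbf{Step 1 (quantization).} Generate a codebook of about $2^{n(I(U\wedge X)+\delta)}$ sequences $u^n$ i.i.d.\ according to $P_U$, and similarly for $v^n$ with $P_V$. Encoder $f_n$ picks a $u^n$ jointly typical with $X^n$; encoder $g_n$ does the same with $Y^n$ and $v^n$. By the covering lemma this succeeds with high probability. By the Markov chain $U-X-Y-V$ and the Markov lemma, $(U^n,X^n,Y^n,V^n)$ is then jointly typical.

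\textbf{Step 2 (binning).} Each encoder sends only the bin index of its codeword, at rates $R_U'$ and $R_V'$. Standard Berger--Tung mutual-covering and packing arguments show the decoder recovers $(U^n,V^n)$ uniquely if
\begin{align*}
R_U' &> I(U\wedge X|V),\\
R_V' &> I(V\wedge Y|U),\\
R_U'+R_V' &> I(X,Y\wedge U,V).
\end{align*}
This is the same computation as in the Berger--Tung inner bound, so I would invoke it rather than redo it.

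\textbf{Step 3 (structured part).} Fix a random linear map $A:\{0,1\}^n\to\{0,1\}^{k}$ with $k\approx n(H(Z|U,V)+\delta)$. Encoder $f_n$ additionally sends $AX^n$ and $g_n$ sends $AY^n$. The decoder forms $AX^n\oplus AY^n=AZ^n$. It then searches for the unique $z^n$ that has syndrome $AZ^n$ and is conditionally typical with the decoded $(U^n,V^n)$. The number of such typical $z^n$ is about $2^{nH(Z|U,V)}$. For a uniformly random linear $A$, a fixed $z'\ne z$ satisfies $Az'=Az$ with probability $2^{-k}$. A union bound therefore makes the error vanish, and this is precisely the K\"orner--Marton argument conditioned on the side information $(U^n,V^n)$. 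Adding $k/n$ to each rate yields exactly (\ref{rate:AH}); corner points are handled by time sharing.

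The main obstacle is Step 3 together with its interaction with Step 2. The same matrix $A$ must be used by both encoders, so $A$ cannot depend on $(U^n,V^n)$. The union bound must also hold jointly over the randomness of $A$ and the realized typical $(U^n,V^n)$, using conditional type classes given $(u^n,v^n)$ rather than unconditional ones. I would handle this by averaging over the uniform ensemble of linear maps and conditioning on the joint type of $(u^n,v^n)$. Pairwise independence of syndromes is all that the union bound requires, so this should go through.
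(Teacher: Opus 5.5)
The paper does not prove this proposition; it is quoted from Ahlswede and Han \cite{AH}. Your outline is correct and is the standard argument behind that result: Berger--Tung quantization and binning deliver $(U^n,V^n)$, and a common random linear map lets the decoder recover $Z^n$ from $AZ^n$ with $(U^n,V^n)$ as side information, which adds $k/n\approx H(Z|U,V)$ to each rate and $2k/n$ to the sum. The step you flag as the main obstacle is routine. $A$ is drawn independently of the source and the codebooks, and $|\mathcal{T}_\delta(Z|u^n,v^n)|\le 2^{n(H(Z|U,V)+\delta)}$ holds for every $(u^n,v^n)$, so averaging over $A$ conditionally on $(U^n,V^n,Z^n)$ suffices without conditioning on types. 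Time sharing is also unnecessary: for fixed $(U,V)$ the Berger--Tung analysis already gives the whole pentagon of $(R_U',R_V')$. That analysis uses covering at each encoder, the Markov lemma, and mutual packing at the decoder, not mutual covering.
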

However, it was conjectured that the minimum sum rate of AH coding cannot be smaller than the minimum of the sum rates achievable by the KM coding or SW coding \cite{conjecture}.

In \cite{KW}, Kakishima and Watanabe considered a $m$-letter extension to AH coding, and designed the auxiliary random variables by taking the XOR of adjacent bits of the source sequences.
By applying AH scheme for $m$-letter extended sources, we can show that a rate pair $(R_X,R_Y)$ satisfying the following conditions is achievable :
\begin{align*}
    R_X &\geq \frac{1}{m}\left(I(\Um \wedge \Xm|\Vm)+H(\Zm|\Um,\Vm)\right),\\
    R_Y &\geq \frac{1}{m}\left(I(\Vm \wedge \Ym|\Um)+H(\Zm|\Um,\Vm)\right),\\
    R_X + R_Y &\geq \frac{1}{m}\left(I(\Xm,\Ym\wedge \Um,\Vm) + 2H(\Zm|\Um,\Vm)\right),
\end{align*}   
where $\Um$ and $\Vm$ satisfy the Markov chain $\Um-\Xm-\Ym-\Vm$, and $Z_i=X_i \oplus Y_i$ for $i=1,2,\ldots,m$.
For notational simplicity, we write $\Xm,\Ym,\Zm,\Um,\Vm,\xm,\ym,\um,\vm$ for $X^m,Y^m,Z^m,U^{m-1},V^{m-1},x^m,y^m,u^{m-1},v^{m-1}$, respectively.
We denote the sum rate of $m$-letter AH coding whose $\Um,\Vm$ are defined as
\begin{equation*}
U_i=X_i\oplus X_{i+1},  V_i=Y_i\oplus Y_{i+1}, \quad i=1,2,\ldots,m-1    
\end{equation*}
by $R^{(m)}_{\mathrm{AH}}$.
For convenience, we define $R_{\mathrm{AH}}^{(1)}=2H(Z)$.
In the rest of the paper, when we refer to the sum rate of $m$-letter AH coding, we mean $R^{(m)}_{\mathrm{AH}}$.
Kakishima and Watanabe demonstrated the existence of the source parameters where $m$-letter AH coding improves upon SW and KM coding through numerical computation.
In addition, as the letter size $m$ increases, the improvement region expands.
But the conditions under which $m$-letter AH coding improves upon SW coding have not been analytically characterized.

%% file: sections/main_result.tex
\section{Main Result}
\subsection{Statement of Results}
    We analytically characterize the condition for $m$-letter AH coding to improve upon SW coding as follows. 
    \begin{theorem}\label{thm:main}
        If the following condition holds
        \begin{equation}\label{cond:mAH improves SW}
            (\theta_{00}-\theta_{10})(\theta_{01}-\theta_{11}) < 0 \quad and \quad (\theta_{00}-\theta_{01})(\theta_{10}-\theta_{11}) < 0,
        \end{equation}
        there exists an integer $m_0$ such that the sum rate $R^{(m)}_{\mathrm{AH}}$ 
        of $m$-letter AH coding is strictly smaller than 
        the SW sum rate $R_{\mathrm{SW}}$ for every $m\geq m_0$.
    \end{theorem}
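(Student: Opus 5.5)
The plan is to reduce the comparison between $R^{(m)}_{\mathrm{AH}}$ and $R_{\mathrm{SW}}$ to a comparison of two conditional entropies, and then compare their exponential decay rates as $m\to\infty$ using the method of types.

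\textbf{Step 1: exact reduction.} Since $\Um,\Vm$ are functions of $(\Xm,\Ym)$, we have $I(\Xm,\Ym\wedge\Um,\Vm)=H(\Um,\Vm)$. Given $(\um,\vm)$, the pair $(\Xm,\Ym)$ is determined by $(X_1,Y_1)$. There are four candidates, $(\xm,\ym),(\bar\xm,\ym),(\xm,\bar\ym),(\bar\xm,\bar\ym)$, where the bar denotes complementing every bit. The sequence $\Zm$ is determined by $(\um,\vm)$ together with $Z_1$. Hence
\begin{align*}
 m R^{(m)}_{\mathrm{AH}} - m R_{\mathrm{SW}} &= H(\Um,\Vm)+2H(Z_1|\Um,\Vm) - H(\Um,\Vm) - H(X_1,Y_1|\Um,\Vm)\\
 &= H(Z_1|\Um,\Vm) - H(X_1|\Um,\Vm,Z_1).
\end{align*}
So it suffices to show that $H(X_1|\Um,\Vm,Z_1)>H(Z_1|\Um,\Vm)$ for all large $m$.

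The two quantities have a clear meaning:
\begin{itemize}
\item $H(Z_1|\Um,\Vm)$ measures confusion between the classes $\{(\xm,\ym),(\bar\xm,\bar\ym)\}$ and $\{(\bar\xm,\ym),(\xm,\bar\ym)\}$;
\item $H(X_1|\Um,\Vm,Z_1)$ measures confusion between $(\xm,\ym)$ and $(\bar\xm,\bar\ym)$.
\end{itemize}
Both tend to $0$ as $m\to\infty$, so the task is to compare their exponents.

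\textbf{Step 2: exponents by the method of types.} Let $\tQxy$ be the joint type of $(\xm,\ym)$, and let $\tQbxy,\tQxby,\tQbxby$ be its flipped versions. Then
\[
P^m(\bar\xm,\ym)/P^m(\xm,\ym)=2^{-m(\tDbxy-\tDxy)},
\]
where $D(Q)$ denotes $D(Q\|P_{XY})$, and similarly for the other flips. Each expected posterior entropy is a sum over types of (type probability) $\times$ $h$(posterior). Since $h$ of a two-point posterior with likelihood ratio $2^{-mt}$ behaves like $2^{-m\max(t,0)}$ up to polynomial factors, I expect
\begin{align*}
H(X_1|\Um,\Vm,Z_1) &\doteq 2^{-m\Dbxbymin}, & \Dbxbymin &= \min_{Q}\max\bigl(D(Q),D(Q_{\bar{X}\bar{Y}})\bigr),\\
H(Z_1|\Um,\Vm) &\doteq 2^{-m\min(\Dxbymin,\Dbxymin)}, & \Dxbymin &= \min_{Q}\max\bigl(D(Q),D(Q_{X\bar{Y}})\bigr),
\end{align*}
with $\Dbxymin$ defined analogously using $Q_{\bar{X}Y}$. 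Each of these is a Chernoff information between $P_{XY}$ and a flipped version of it.

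For the upper bound on $H(Z_1|\Um,\Vm)$, I will bound $h(p)\le -2p\log p$ for small $p$ and sum over the polynomially many types. For the lower bound on $H(X_1|\cdot)$, it suffices to exhibit a single type near the minimizer for which the two likelihoods are within a constant factor. That type contributes $\mathrm{poly}(m)^{-1}2^{-m\Dbxbymin}$.

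\textbf{Step 3: evaluate the exponents.} The flip maps are involutions, so the minimizing $Q$ can be taken flip-symmetric, with the Chernoff parameter at $s=1/2$. This gives
\begin{align*}
\Dbxbymin &= -\log 2\bigl(\sqrt{\theta_{00}\theta_{11}}+\sqrt{\theta_{01}\theta_{10}}\bigr),\\
\Dbxymin &= -\log 2\bigl(\sqrt{\theta_{00}\theta_{10}}+\sqrt{\theta_{01}\theta_{11}}\bigr),\\
\Dxbymin &= -\log 2\bigl(\sqrt{\theta_{00}\theta_{01}}+\sqrt{\theta_{10}\theta_{11}}\bigr).
\end{align*}
The differences factor:
\begin{align*}
\sqrt{\theta_{00}\theta_{11}}+\sqrt{\theta_{01}\theta_{10}}-\sqrt{\theta_{00}\theta_{10}}-\sqrt{\theta_{01}\theta_{11}} &= (\sqrt{\theta_{00}}-\sqrt{\theta_{01}})(\sqrt{\theta_{11}}-\sqrt{\theta_{10}}),\\
\sqrt{\theta_{00}\theta_{11}}+\sqrt{\theta_{01}\theta_{10}}-\sqrt{\theta_{00}\theta_{01}}-\sqrt{\theta_{10}\theta_{11}} &= (\sqrt{\theta_{00}}-\sqrt{\theta_{10}})(\sqrt{\theta_{11}}-\sqrt{\theta_{01}}).
\end{align*}
Both are positive exactly under condition (\ref{cond:mAH improves SW}). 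Hence that condition is equivalent to
\[
\Dbxbymin<\min\bigl(\Dxbymin,\Dbxymin\bigr).
\]
Combined with Step 2, this gives $H(X_1|\Um,\Vm,Z_1)>H(Z_1|\Um,\Vm)$ for all $m\ge m_0$. By Step 1, that is $R^{(m)}_{\mathrm{AH}}<R_{\mathrm{SW}}$.

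\textbf{Main obstacle.} The main difficulty is Step 2: making the exponents rigorous with matching upper and lower bounds. Two points need care.
\begin{itemize}
\item The boundary case where some $\theta_{ij}=0$. Under (\ref{cond:mAH improves SW}) this is largely excluded, but divergences could be infinite, and this must be checked.
\item The fact that only types of length-$m$ sequences are allowed, so the minimizer must be approximated by a type with a controlled $O(1/m)$ loss, which does not affect the exponent.
\end{itemize}
I will first prove the upper bound via a union over the four candidate pairs. For the lower bound, I will use a rounding of the symmetric minimizer to a type for which the two relevant likelihoods differ by at most a bounded factor.
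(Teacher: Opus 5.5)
Your proposal is correct, and its overall structure matches the paper's. Your $H(Z_1|\Um,\Vm)-H(X_1|\Um,\Vm,Z_1)$ is exactly the paper's $C^{(m)}$ rewritten by the chain rule, and both arguments end with the same three Bhattacharyya-type exponents and the same factorization.

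The genuine difference is in how the exponents are obtained. The paper symmetrizes over flip classes to restrict the sum to $\mathcal{Q}_m$. It then minimizes $\Dxby$, $\Dbxy$, $\Dbxby$ over the relaxed set $\mathcal{Q}$ using a boundary lemma and Lagrange multipliers, and checks that the minimizer lies in $\mathcal{Q}$. Finally it needs a density-of-types argument with a case analysis to keep the approximating type inside $\mathcal{Q}_m$.

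You instead read each exponent as an unconstrained Chernoff information $\min_Q\max\bigl(D(Q\|P_{XY}),D(Q\|P')\bigr)$, with $P'=P_{XY}\circ\sigma$ for an involutive relabeling $\sigma$. Convexity plus invariance under $Q\mapsto Q\circ\sigma$ then gives the minimizer $\propto\sqrt{P_{XY}P'}$ immediately. For the lower bound, you round this flip-symmetric minimizer to a type with $|n_{00}-n_{11}|\le 1$ and $|n_{01}-n_{10}|\le 1$, using zero counts off the common support. This keeps the probabilities of $(\xm,\ym)$ and $(\bar{\xm},\bar{\ym})$ within a constant factor, so it no longer matters which flip is dominant.

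Your route is shorter and explains why Bhattacharyya coefficients appear. The paper's route yields one explicit per-type inequality with constants.

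When you write it out, state two points explicitly:
\begin{enumerate}
\item Every nonzero posterior is at least $(\theta^\star)^m/4$, so the $\log(1/p)$ factor coming from $h(p)\le 2p\log(1/p)$ is $O(m)$.
\item Your union over candidate pairs rests on an inequality. Let $p_{ab}$ denote the probability of the candidate flipped in $X$ iff $a=1$ and in $Y$ iff $b=1$. Then $\min(p_{00}+p_{11},p_{01}+p_{10})$ is at most the sum of $\min(p_{ij},p_{kl})$ over the four cross pairs $ij\in\{00,11\}$, $kl\in\{01,10\}$. The two pairs containing $(\bar{\xm},\bar{\ym})$ reduce by relabeling to the same two Chernoff informations as the two containing $(\xm,\ym)$.
\end{enumerate}
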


    It should be noted that $R_{\mathrm{AH}}^{(m)}$ converges to $H(X,Y)$ as $m$ increases.
    Thus, although Theorem \ref{thm:main} guarantees a strict improvement over the SW sum rate
    for all sufficiently large $m$, the amount of improvement decreases as $m$ increases.
    According to the numerical results in \cite{KW}, a strict improvement can be
    observed for relatively small values of $m$ over a fairly broad range of parameters.
    However, as can be seen from the proof of Theorem \ref{thm:main}, 
    the smallest positive integer $m_0$ for which a strict improvement 
    can be theoretically guaranteed is quite large.
    Even if an improvement is observed for a small value of $m$,
    it is unclear whether the improvement persists for every larger $m$.
    However, Theorem \ref{thm:main} guarantees that the improvement continues 
    for all $m$ beyond some threshold $m_0$.

    Perhaps surprisingly, the condition (\ref{cond:mAH improves SW}) is exactly the complement of the condition \eqref{cond:SW optimal parameter} for the optimality of SW coding.
    In other words, the sufficient condition for the SW rate region to be optimal shown in \cite[Ex 16.23]{Csiszár_Körner_2011} is also necessary.
    \begin{corollary}\label{cor:optimality}
        The SW rate region is optimal if and only if \eqref{cond:SW optimal parameter} holds.
    \end{corollary}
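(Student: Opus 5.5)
The corollary is an ``if and only if'', and I will prove the two directions separately. Both follow quickly from results already stated: the Proposition quoted from \cite[Ex 16.23]{Csiszár_Körner_2011}, and Theorem \ref{thm:main}.

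For the ``if'' direction, suppose \eqref{cond:SW optimal parameter} holds. The first step is to check that this parametric condition is equivalent to $H(Z)\geq \min[H(X),H(Y)]$. The paper states this equivalence as a rewriting; I would verify it directly. We have $P_Z(1)=\theta_{01}+\theta_{10}$, $P_X(1)=\theta_{10}+\theta_{11}$, and $P_Y(1)=\theta_{01}+\theta_{11}$. Since $h$ is symmetric about $1/2$ and increasing on $[0,1/2]$, $h(a)\geq h(b)$ is equivalent to $|a-1/2|\leq|b-1/2|$, that is, $(1-2a)^2\leq(1-2b)^2$. Write $1-2P_Z(1)=\theta_{00}+\theta_{11}-\theta_{01}-\theta_{10}$ and $1-2P_Y(1)=\theta_{00}+\theta_{10}-\theta_{01}-\theta_{11}$. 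Factoring the difference of squares gives
\begin{align*}
(1-2P_Y(1))^2-(1-2P_Z(1))^2=4(\theta_{00}-\theta_{01})(\theta_{10}-\theta_{11}),
\end{align*}
and the analogous computation with $X$ gives $4(\theta_{00}-\theta_{10})(\theta_{01}-\theta_{11})$. Hence $H(Z)\geq H(Y)$ if and only if $(\theta_{00}-\theta_{01})(\theta_{10}-\theta_{11})\geq0$, and $H(Z)\geq H(X)$ if and only if $(\theta_{00}-\theta_{10})(\theta_{01}-\theta_{11})\geq0$. So the disjunction \eqref{cond:SW optimal parameter} is exactly $H(Z)\geq\min[H(X),H(Y)]$, and the Proposition then says that the SW region (\ref{rate:SW}) is the achievable region, i.e., SW is optimal.

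For the ``only if'' direction, I argue by contraposition. Suppose \eqref{cond:SW optimal parameter} fails. Then both products are strictly negative, which is precisely condition \eqref{cond:mAH improves SW}. Theorem \ref{thm:main} gives some $m\geq m_0$ with $R^{(m)}_{\mathrm{AH}}<R_{\mathrm{SW}}=H(X,Y)$. The $m$-letter AH scheme achieves every rate pair satisfying the three displayed $m$-letter constraints, with the auxiliary variables $U_i=X_i\oplus X_{i+1}$ and $V_i=Y_i\oplus Y_{i+1}$. These satisfy the Markov chain $\Um-\Xm-\Ym-\Vm$ because $\Um$ is a function of $\Xm$ and $\Vm$ is a function of $\Ym$. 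In particular there is an achievable pair with sum rate $R^{(m)}_{\mathrm{AH}}$: take the individual rates at their lower bounds, and if their sum falls below the sum-rate bound, raise one of them until the sum-rate constraint is met. This achievable pair lies outside (\ref{rate:SW}), since its sum is below $H(X,Y)$. Hence SW is not optimal. This matches the earlier remark that SW is suboptimal exactly when the sum rate can be improved.

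The only step that needs genuine care is the algebraic equivalence in the ``if'' direction, specifically keeping track of which factor corresponds to $H(X)$ and which to $H(Y)$. All the substantive difficulty is carried by Theorem \ref{thm:main}, which we are allowed to assume.
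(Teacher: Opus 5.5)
Your proposal is correct and follows essentially the same route as the paper, which also combines the Csisz\'ar--K\"orner sufficient condition with Theorem~\ref{thm:main} and the observation that \eqref{cond:mAH improves SW} is exactly the negation of \eqref{cond:SW optimal parameter}. You go slightly further than the paper by verifying the equivalence between $H(Z)\geq\min[H(X),H(Y)]$ and \eqref{cond:SW optimal parameter}, which the paper only asserts, and your ``raise one rate'' step is legitimate because here $I(\Xm,\Ym\wedge\Um,\Vm)=H(\Um,\Vm)\geq H(\Um|\Vm)+H(\Vm|\Um)=I(\Um\wedge\Xm|\Vm)+I(\Vm\wedge\Ym|\Um)$, so the individual lower bounds never sum to more than the sum-rate bound.
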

    
    \begin{proof}
        The known sufficient condition for the SW rate region to be optimal is given by \eqref{cond:SW optimal parameter}.
        Theorem \ref{thm:main} shows that the $m$-letter AH coding strictly improves the SW sum rate when \eqref{cond:mAH improves SW} holds.
        Since condition \eqref{cond:mAH improves SW} is exactly the complement of condition \eqref{cond:SW optimal parameter}, the sufficient condition for the SW rate region to be optimal is also necessary.
    \end{proof}

    When the source distribution is given by $\theta_{00}=a,\theta_{01}=\theta_{10}=b,\theta_{11}=1-a-2b$,
    we can rewrite the condition \eqref{cond:mAH improves SW} in terms of $a$ and $b$ as
    \begin{equation}
    \begin{aligned}\label{cond:mAH improves SW ab}
        (a>b,1-a-2b>b) \quad \text{or} \quad (a<b,1-a-2b<b).
    \end{aligned}
    \end{equation}
    Figure \ref{fig:example} plots the regions of the parameter $(a,b)$ in which SW coding is optimal and sub-optimal.

    \begin{figure}[t]
        \centering
        \includegraphics[width=0.8\linewidth]{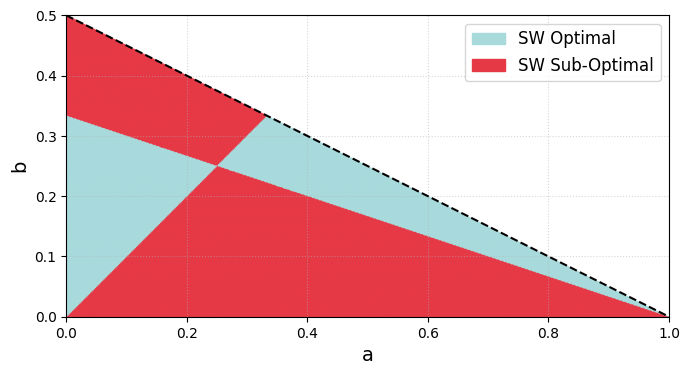}
        \caption{A description of the condition on $a,b$ in \eqref{cond:mAH improves SW ab}}
    \label{fig:example}
    \end{figure}

\subsection{Proof}
    In this subsection, we give a proof of Theorem \ref{thm:main}.

    Recall that we analyze the performance of $m$-letter AH coding with the auxiliary random variables defined as adjacent parity, i.e., $U_i=X_i\oplus X_{i+1}$ and $V_i=Y_i\oplus Y_{i+1}$ for $i=1,2,\ldots,m-1$.
    This construction gives rise to a particular coset structure as follows.
    For $( \um , \vm )\in \{0,1\}^{m-1} \times \{0,1\}^{m-1}$, let
        \begin{align*}
            \mathcal{A}( \um , \vm ) = \{( \xm , \ym ): x_i \oplus x_{i+1} = u_i, y_i \oplus y_{i+1} = v_i, 1 \leq i \leq m-1 \}.
        \end{align*}
    Similarly, let
        \begin{align*}
            \mathcal{B}( \um , \vm ,z_m) = \{( \xm , \ym ): x_i \oplus x_{i+1} = u_i, y_i \oplus y_{i+1} = v_i, z_m = x_m \oplus y_m, 1 \leq i \leq m-1 \}.
        \end{align*}
    For a binary sequence $\xm$, we denote by $\bar{\xm}$ its bitwise complement, i.e., $\bar{x}_i=x_i\oplus 1$ for all $i$.
    The same notation is used for $\ym$.
    Then we can see that $|\mathcal{A}(\um,\vm)|=4$ for any $( \um , \vm )$, since only $( \xm , \ym ),( \xm ,\bar{ \ym }),(\bar{ \xm }, \ym ),$ and $(\bar{ \xm },\bar{ \ym })$ produce the same $( \um , \vm )$.
    Also, $|\mathcal{B}(\um,\vm,z_m)|=2$ for any $( \um , \vm ,z_m)$, since only $( \xm , \ym )$ and $(\bar{ \xm },\bar{ \ym })$ produce the same $( \um , \vm ,z_m)$.

    To improve upon SW coding, the sum rate of $m$-letter AH coding must be smaller than that of SW coding.
    Thus, we are able to rewrite the improvement condition as 
    \begin{equation*}
            R^{(m)}_{\mathrm{AH}} < R_{\mathrm{SW}}.
    \end{equation*}
    Since $Z_1,\ldots,Z_{m-1}$ are deterministic functions of $(\Um,\Vm,Z_m)$, and since $\Um$ and $\Vm$ are functions of $\Xm$ and $\Ym$, we can rewrite $R^{(m)}_{\mathrm{AH}}$ as
    \begin{align*}
        R^{(m)}_{\mathrm{AH}} = H(X,Y) + \frac{1}{m}\left( H(\Xm,\Ym|\Um,\Vm) - 2H(\Xm,\Ym|\Um,\Vm,Z_m) \right).
    \end{align*}

    On the other hand, $R_{\mathrm{SW}}$ is given by 
    \begin{align*}
        R_{\mathrm{SW}} = H(X,Y).
    \end{align*}
    Therefore, the improvement condition can be rewritten as
    \begin{align}
        C^{(m)} := H(\Xm,\Ym|\Um,\Vm) - 2H(\Xm,\Ym|\Um,\Vm,Z_m) < 0. \label{eq:C}
    \end{align}

    We first dispose of some trivial cases.
    If three of the four source probabilities are zero, then $(X,Y)$ is deterministic.
    Hence $Z$ is also deterministic, and \eqref{cond:SW optimal parameter} is satisfied.

    If exactly two of the four source probabilities are zero, then the support of $P_{XY}$ consists of two points.
    There are two cases.
    If the two points have the same modulo-sum value, then $Z$ is deterministic.
    Thus, zero communication is sufficient to recover $Z$, and the SW sum rate is strictly improved whenever $H(X,Y)>0$.
    This case occurs exactly when the two nonzero probabilities lie on the diagonal or on the off-diagonal, 
    and it satisfies condition \eqref{cond:mAH improves SW}, 
    $R_{\mathrm{AH}}^{(1)}=2H(Z)<H(X,Y)$ and 
    $R_{\mathrm{AH}}^{(m)}=\frac{1}{m}I(\Xm,\Ym\wedge\Um\Vm)<H(X,Y)$ for every $m\geq 2$.

    On the other hand, if the two points have different modulo-sum values, then the two nonzero probabilities lie in the same row or in the same column.
    In this case, one of the sources is deterministic, and recovering $Z$ is equivalent to recovering the other source.
    Hence the SW sum rate is already optimal.
    We can also verify that condition \eqref{cond:SW optimal parameter} holds.

    Therefore, in the rest of the proof, we only need to consider the cases where either $P_{XY}$ has full support or exactly one source probability is zero.

    To analyze the sign of $C^{(m)}$, we first rewrite $C^{(m)}$ in terms of the types of $(\Xm,\Ym)$.
    For notational simplicity, we introduce the following notation for the types and divergences.
    
    Let $\mathcal P_m$ be the set of all joint types on $\{0,1\}^2$ with denominator $m$. For a type $Q_{XY}\in\mathcal P_m$, if $\operatorname{supp}(Q_{XY})\not\subset\operatorname{supp}(P_{XY})$, then the type class $\mathcal{T}_{Q_{XY}}$ has zero probability.
    Equivalently, we use the convention that $D(Q_{XY}|P_{XY})=\infty$ when $\operatorname{supp}(Q_{XY})\not\subset\operatorname{supp}(P_{XY})$, and $2^{-m\infty}=0$.

    For a joint type $\Qxy$, let $\mathcal{T}_{\Qxy}$ denote the set of all pairs of sequences $(\xm,\ym)$ whose joint type is $\Qxy$,
    and we define $\Qxby$, $\Qbxy$ and $\Qbxby$ as the joint types obtained from $\Qxy$ by flipping $Y$, $X$, and both, respectively.
    For a joint distribution $\Qxy$, we define the notation of divergences between $\Qxy$ and $P_{XY}$ as follows:
    \begin{equation}\label{nt:divergence}
        \Dxy \coloneq D(\Qxy||P_{XY}).
    \end{equation}
    Then, we can rewrite $C^{(m)}$ as the following lemma.

    \begin{lemma}\label{lem:type expression of C}
    The quantity $C^{(m)}$ admits the representation
    \begin{equation}\label{eq:type expression of C^{(m)}}
        C^{(m)}=\sum_{Q_{XY} \in \mathcal{P}_m}P(\mathcal{T}_{Q_{XY}})C_{\mathrm{i}}(Q_{XY})
    \end{equation}
    where 
    \begin{align}
        C_\mathrm{i}(Q_{XY}) &= \left[  h\left( \frac{2^{-m\Dxy}+2^{-m\Dbxby}}{S(Q_{XY})} \right)\right.\notag\\
        &\quad -\left(\frac{2^{-m\Dxy}+2^{-m\Dbxby}}{S(Q_{XY})}\right)h\left(\frac{2^{-m\Dxy}}{2^{-m\Dxy}+2^{-m\Dbxby}}\right)\label{eq:C_i} \\ 
        &\quad -\left.\left(\frac{2^{-m\Dxby}+2^{-m\Dbxy}}{S(Q_{XY})}\right)h\left(\frac{2^{-m\Dxby}}{2^{-m\Dxby}+2^{-m\Dbxy}}\right)\right],\notag\\
        S(Q_{XY}) &= 2^{-m \Dxy } + 2^{-m \Dxby } + 2^{-m \Dbxy } + 2^{-m \Dbxby }.\label{eq:S}
    \end{align}

    \end{lemma}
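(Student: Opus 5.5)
We will compute both conditional entropies in $C^{(m)}$ directly from the coset structure and then regroup the result by joint types.

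\textbf{Step 1: the first conditional entropy.} Fix $(\xm,\ym)$ and its coset $\mathcal{A}(\um,\vm)=\{(\xm,\ym),(\xm,\bar\ym),(\bar\xm,\ym),(\bar\xm,\bar\ym)\}$. Since the source is i.i.d., $P^m(\xm,\ym)=2^{-m(H(\Qxy)+\Dxy)}$ whenever the joint type of $(\xm,\ym)$ is $\Qxy$. Flipping $\ym$, $\xm$, or both maps the joint type to $\Qxby$, $\Qbxy$, or $\Qbxby$, respectively. These types are permutations of the entries of $\Qxy$, so all four have the same entropy $H(\Qxy)$. The common factor $2^{-mH(\Qxy)}$ therefore cancels in the conditional probabilities. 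Given $(\Um,\Vm)=(\um,\vm)$, the four elements have probabilities proportional to $2^{-m\Dxy},2^{-m\Dxby},2^{-m\Dbxy},2^{-m\Dbxby}$, with normaliser $S(\Qxy)$ from \eqref{eq:S}.

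\textbf{Step 2: the second conditional entropy.} Conditioning further on $Z_m$ splits the coset into $\{(\xm,\ym),(\bar\xm,\bar\ym)\}$ and $\{(\xm,\bar\ym),(\bar\xm,\ym)\}$, because $z_m$ is unchanged by flipping both sequences and is complemented by flipping only one. By the chain rule (grouping property), the four-point entropy decomposes as
\[
H_4 = h(p) + p\,h\!\left(\tfrac{2^{-m\Dxy}}{2^{-m\Dxy}+2^{-m\Dbxby}}\right) + (1-p)\,h\!\left(\tfrac{2^{-m\Dxby}}{2^{-m\Dxby}+2^{-m\Dbxy}}\right),
\]
where $p=(2^{-m\Dxy}+2^{-m\Dbxby})/S(\Qxy)$. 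The conditional entropy given $(\um,\vm,z_m)$, averaged over $z_m$ within the coset, equals the last two terms. Hence, pointwise in the coset, $H_4-2(\text{last two terms})$ is exactly $C_\mathrm{i}(\Qxy)$ in \eqref{eq:C_i}.

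\textbf{Step 3: regrouping by types.} We write
\[
C^{(m)}=\sum_{(\um,\vm)}P(\um,\vm)\Big[H(\Xm\Ym\mid\um,\vm)-2\sum_{z_m}P(z_m\mid\um,\vm)H(\Xm\Ym\mid\um,\vm,z_m)\Big].
\]
The bracket depends only on the coset, and $C_\mathrm{i}$ is invariant under relabelling the representative. Relabelling permutes $\{\Qxy,\Qxby,\Qbxy,\Qbxby\}$ by a Klein-four action that preserves both the pairing $\{\Qxy,\Qbxby\}$/$\{\Qxby,\Qbxy\}$ and $h(q)=h(1-q)$. We can therefore spread $P(\um,\vm)=\sum_{(\xm,\ym)\in\mathcal{A}}P^m(\xm,\ym)$ over the coset members. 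Summing over all $(\xm,\ym)$ gives $\sum_{(\xm,\ym)}P^m(\xm,\ym)C_\mathrm{i}(Q_{\xm\ym})$, and grouping sequences by their joint type gives \eqref{eq:type expression of C^{(m)}}.

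\textbf{Main obstacle.} The delicate step is verifying that $C_\mathrm{i}$ is invariant under the choice of representative. Under $\ym\to\bar\ym$, the argument of the leading $h$ becomes $1-p$, and the two weighted terms swap. We must also handle zero-probability types: the convention $2^{-m\infty}=0$ together with $0\,h(0/0)=0$ is needed, and it applies only to types with $P(\mathcal{T}_{Q_{XY}})>0$, which ensures $S>0$.
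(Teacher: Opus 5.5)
Your proposal is correct and follows essentially the same route as the paper: you cancel the common factor $2^{-mH(Q_{XY})}$ within each four-element coset, split the four-point entropy by the grouping rule into $h(p)$ plus the two weighted binary entropies that make up $H(\Xm,\Ym|\um,\vm,Z_m)$, and then spread $P(\um,\vm)$ over the coset members and regroup by joint type. Your explicit check that $C_{\mathrm{i}}$ does not depend on which coset representative is chosen, together with the $0\cdot h(0/0)=0$ convention, spells out two points the paper leaves implicit.
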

    
    The proof of Lemma \ref{lem:type expression of C} is given in Appendix \ref{prf:type expression of C}.
    Because we get $P(\mathcal{T}_{Q_{XY}})=0$ when $S(Q_{XY})=0$, and thus the contribution of $Q_{XY}$ to $C^{(m)}$ is zero, we can ignore the case where $S(Q_{XY})=0$.

    In fact, $C_{\mathrm{i}}(Q_{XY})$ has the same value for the types $Q_{XY}$, $Q_{X\bar{Y}}$, $Q_{\bar{X}Y}$ and $Q_{\bar{X}\bar{Y}}$.

    Moreover, $C_{\mathrm{i}}(Q_{XY})$ can be rewritten as exponential terms of the divergences $\Dxy$, $\Dxby$, $\Dbxy$ and $\Dbxby$ defined in (\ref{nt:divergence}).
    Considering this symmetry and deriving an upper bound on $C_{\mathrm{i}}(Q_{XY})$, we can get the following lemma which gives us an upper bound of $C^{(m)}$.
    As we will see later, this upper bound is tight in the sense that the sign on $C^{(m)}$ is determined by the dominant exponential term when $m$ is large.

    \begin{lemma}\label{lem:upper bound of C}
        $C^{(m)}$ can be upper bounded as
        \begin{equation}\label{eq:upper bound of C}
            C^{(m)} \leq \sum_{Q_{XY} \in \mathcal{Q}_m} \left\{8\left(2 + m\log \frac{1}{\theta^\star}\right)\left(2^{-m\Dxby}+2^{-m\Dbxy}\right)  - \frac{2^{-m\Dbxby}}{16(m+1)^4}\right\}.
        \end{equation}
        where $\mathcal{Q}_m$ is subset of the set of types defined as
        \begin{equation} \label{eq:mathcalQ_m}
        \mathcal{Q}_m = \{Q_{XY}\in \mathcal{P}_m: \Dxy \leq \min[\Dxby,\Dbxy,\Dbxby]\}
        \end{equation}
        and 
        \begin{equation}
            \theta^\star = \min \{\theta_{ij}:\theta_{ij} > 0\}. \label{nt:theta star}
        \end{equation}
    \end{lemma}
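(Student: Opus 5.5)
The plan is to group the sum in Lemma~\ref{lem:type expression of C} into orbits of the four flips and bound each orbit by elementary entropy inequalities. Throughout, fix a type $\Qxy$ and write
\[
a=2^{-m\Dxy},\quad b=2^{-m\Dbxby},\quad c=2^{-m\Dxby},\quad d=2^{-m\Dbxy},\quad S=a+b+c+d .
\]

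\emph{Step 1: collapse each orbit onto a representative in $\mathcal{Q}_m$.}
\begin{itemize}[leftmargin=*]
\item The maps $\Qxy\mapsto\Qxby,\Qbxy,\Qbxby$ only permute the entries of the type. Hence $|\mathcal{T}_{\Qxy}|$ and $H(\Qxy)$ are the same on the orbit $\{\Qxy,\Qxby,\Qbxy,\Qbxby\}$, and so are $S(\Qxy)$ and $C_{\mathrm i}(\Qxy)$.
\item Using $P(\mathcal{T}_{Q})=|\mathcal{T}_Q|\,2^{-m(H(Q)+D(Q\|P_{XY}))}$, the total contribution of one orbit to $C^{(m)}$ is
\[
|\mathcal{T}_{\Qxy}|\,2^{-mH(\Qxy)}\, S(\Qxy)\, C_{\mathrm i}(\Qxy).
\]
\item Every orbit contains at least one, and at most four, types with minimal divergence, i.e.\ elements of $\mathcal{Q}_m$. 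I split the orbit contribution equally among its $k\le 4$ representatives in $\mathcal{Q}_m$. This costs at most a factor $1/4$, and the constant $16$ is meant to absorb it.
\item For a representative, $a\ge\max(b,c,d)$, and the prefactor satisfies $(m+1)^{-4}\le|\mathcal{T}_Q|2^{-mH(Q)}\le 1$ (standard type-class bounds with $|\mathcal{X}\times\mathcal{Y}|=4$).
\item Consequently, the positive part of the orbit term may be bounded with prefactor $1$ and the negative part with prefactor $(m+1)^{-4}$.
\end{itemize}

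\emph{Step 2: rewrite $S\,C_{\mathrm i}$.} With $p=(a+b)/S$, formula \eqref{eq:C_i} gives
\[
S\,C_{\mathrm i}=S\,h(p)-(a+b)\,h\!\left(\tfrac{a}{a+b}\right)-(c+d)\,h\!\left(\tfrac{c}{c+d}\right).
\]
\begin{itemize}[leftmargin=*]
\item Drop the last term, since it is $\le 0$.
\item Expand $S\,h(p)=(a+b)\log\frac{S}{a+b}+(c+d)\log\frac{S}{c+d}$.
\item Because $a\ge S/4$, we have $(a+b)\log\frac{S}{a+b}=(a+b)\log\bigl(1+\frac{c+d}{a+b}\bigr)\le(c+d)\log e\le 2(c+d)$.
\item For the other piece: every finite divergence satisfies $D(Q\|P_{XY})\le\log(1/\theta^\star)$, since $D\le\sum_{ij}Q_{ij}\log(1/\theta_{ij})$ on the support. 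Hence if $c+d>0$ then $c+d\ge(\theta^\star)^m$, and with $S\le 4$ we get $\log\frac{S}{c+d}\le 2+m\log\frac{1}{\theta^\star}$. If $c=d=0$ the term vanishes.
\item Altogether the positive part is at most a constant times $\bigl(2+m\log\frac1{\theta^\star}\bigr)(c+d)$, and the constant $8$ leaves room for slack.
\end{itemize}

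\emph{Step 3: the negative term.} Since $a\ge b$, we have $x=a/(a+b)\in[1/2,1]$. Concavity of $h$ with $h(1/2)=1$ and $h(1)=0$ gives $h(x)\ge 2(1-x)$. Therefore
\[
(a+b)\,h\!\left(\tfrac{a}{a+b}\right)\ge 2b .
\]
Combining this with the prefactor $(m+1)^{-4}$ and the orbit-splitting factor $1/4$ yields a negative contribution of at least $b/(16(m+1)^4)$ per representative. Summing over $\mathcal{Q}_m$ gives \eqref{eq:upper bound of C}.

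\emph{Expected obstacle.} The delicate point is the bookkeeping in Step~1. The positive and negative pieces must receive different prefactor bounds ($1$ versus $(m+1)^{-4}$). This works only because the orbit prefactor is a single common positive number that can be factored out, after which each sign is bounded separately. Ties in the minimum defining $\mathcal{Q}_m$ must not double-count the negative term; the uniform split among at most four representatives handles this.
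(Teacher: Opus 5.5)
Your argument is correct and is essentially the paper's. You group each type with its three flips, charge the class to its minimal-divergence members (exactly the elements of $\mathcal{Q}_m$), bound the positive part via $S(Q_{XY})\le 4$ and the fact that every finite divergence is at most $\log\frac{1}{\theta^\star}$, bound the negative part below by a multiple of $2^{-mD(Q_{\bar X\bar Y})}$, and finish with $(m+1)^{-4}\le|\mathcal{T}_{Q_{XY}}|2^{-mH(Q_{XY})}\le 1$. Your expansion $S\,h(p)=(a+b)\log\frac{S}{a+b}+(c+d)\log\frac{S}{c+d}$ is a tidy substitute for the paper's case split on $p\le\frac12$ versus $p>\frac12$.

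One correction to Step~1. If a type is fixed by some flip (e.g.\ $q_{00}=q_{11}$ and $q_{01}=q_{10}$), its orbit $O$ has fewer than four distinct types. Because $S$ counts each type of $O$ exactly $4/|O|$ times, the orbit contributes only $\frac{|O|}{4}\,|\mathcal{T}_{Q_{XY}}|2^{-mH(Q_{XY})}S(Q_{XY})C_{\mathrm{i}}(Q_{XY})$, not the amount you state. Since $k\le|O|$, the per-representative weight $\frac{|O|}{4k}$ still lies in $[\frac14,1]$, so your constants survive unchanged. The paper sidesteps this by averaging over the four flips with multiplicity.
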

    Proof of Lemma \ref{lem:upper bound of C} is given in Appendix \ref{prf:upper bound of C}.

    The upper bound in Lemma \ref{lem:upper bound of C} becomes negative 
    if a negative term becomes dominant for some joint type $Q_{XY} \in \mathcal{Q}_m$.
    Thus, the improvement condition can be characterized by comparing the minimum of
    $\Dxby$, $\Dbxy$ and $\Dbxby$ among $Q_{XY}\in \mathcal{Q}_m$.

    We now minimize $\Dxby,\Dbxy,\Dbxby$ over $Q_{XY} \in \mathcal{Q}_m$ using the method of Lagrange multipliers.
    To simplify analysis, we first relax the optimization problem by minimizing $\Dxby,\Dbxy,\Dbxby$ over  $\mathcal{Q}$, 
    which is not a type but a distribution, where 
    \begin{equation}
        \mathcal{Q}=\{Q_{XY} \in \mathcal{P} (\operatorname{supp}(P_{XY})): \Dxy \leq \min[\Dxby,\Dbxy,\Dbxby]\}.        
    \end{equation}
    where $\mathcal{P} (\operatorname{supp}(P_{XY}))$ is the set of all distributions on the support of $P_{XY}$.
    In other words, we ignore the condition that $\Qxy$ must be a type.

    \begin{lemma}\label{lem:minimum divergence}
    When $|\operatorname{supp}(P_{XY})|\geq 3$, the minimum values of $\Dxby,\Dbxy,\Dbxby$ over $\mathcal{Q}$ are given by
    \begin{align}
        \Dxbymin \coloneqq \min_{\Qxy \in \mathcal{Q}} \Dxby =\log\frac{1}{2(\sqrt{\theta_{00}\theta_{01}}+\sqrt{\theta_{10}\theta_{11}})},\label{eq:min Dxby} \\ 
        \Dbxymin\coloneqq \min_{\Qxy \in \mathcal{Q}} \Dbxy=\log\frac{1}{2(\sqrt{\theta_{00}\theta_{10}}+\sqrt{\theta_{01}\theta_{11}})},\label{eq:min Dbxy} \\
        \Dbxbymin \coloneqq \min_{\Qxy \in \mathcal{Q}} \Dbxby=\log\frac{1}{2(\sqrt{\theta_{00}\theta_{11}}+\sqrt{\theta_{01}\theta_{10}})}.\label{eq:min Dbxby}
    \end{align}
    Furthermore, the minimum value in \eqref{eq:min Dbxby} is attained at $\Qxy^*$ satisfying
    \begin{align}
        \Dbxbyas = \Dxyas,\label{eq:boundary equality} \\
        \Dxbyas = \Dbxyas.\label{eq:boundary equality2}
    \end{align}
    \end{lemma}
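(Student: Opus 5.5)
The plan is to treat each of the three minimizations as a Chernoff-type problem between $P_{XY}$ and one of its flipped versions. The lower bound will use only the single constraint of $\mathcal{Q}$ that involves the divergence being minimized. I then exhibit an explicit geometric-mean distribution that attains the bound and check that it lies in $\mathcal{Q}$.

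\emph{Lower bound.} I illustrate with $\Dxby$; the other two cases are identical after relabeling. Write $P'=P_{X\bar Y}$, so that $\Dxby = D(\Qxy\|P')$ by the change of variables $y\mapsto y\oplus 1$. Define the normalizer
\[
Z_1=\sum_{x,y}\sqrt{P_{XY}(x,y)P'(x,y)}=2\bigl(\sqrt{\theta_{00}\theta_{01}}+\sqrt{\theta_{10}\theta_{11}}\bigr)
\]
and the distribution $R_1=\sqrt{P_{XY}P'}/Z_1$. Since $|\operatorname{supp}(P_{XY})|\ge 3$, at least one of the products $\theta_{00}\theta_{01}$, $\theta_{10}\theta_{11}$ is positive, so $Z_1>0$ and $R_1$ is well defined. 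For every $\Qxy\in\mathcal{Q}$ we have $\Dxy\le\Dxby$. Hence
\[
\Dxby\ \ge\ \tfrac12\bigl(\Dxy+\Dxby\bigr)\ =\ D(\Qxy\|R_1)-\log Z_1\ \ge\ \log\frac{1}{Z_1}.
\]
The middle equality is a direct expansion of the logarithms, and the last step uses nonnegativity of divergence. This gives the right-hand side of \eqref{eq:min Dxby} as a lower bound. The same argument with $R_2\propto\sqrt{P_{XY}P_{\bar XY}}$ and $R_3\propto\sqrt{P_{XY}P_{\bar X\bar Y}}$ gives \eqref{eq:min Dbxy} and \eqref{eq:min Dbxby} as lower bounds.

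\emph{Achievability.} Take $\Qxy=R_1$. It is supported in $\operatorname{supp}(P_{XY})$, and it is invariant under flipping $Y$, that is, $R_1=[a,a;b,b]$ with $a=\sqrt{\theta_{00}\theta_{01}}/Z_1$ and $b=\sqrt{\theta_{10}\theta_{11}}/Z_1$. This invariance makes both inequalities above equalities, so $D(R_1\|P_{XY})=D(R_1\|P')=\log(1/Z_1)$. It also gives $(R_1)_{\bar XY}=(R_1)_{\bar X\bar Y}$. It then remains to verify the other constraint of $\mathcal{Q}$, namely that flipping $X$ does not decrease the divergence. A direct computation gives
\[
D\bigl((R_1)_{\bar XY}\|P_{XY}\bigr)-D(R_1\|P_{XY})=2(a-b)\log\frac{a}{b}\ \ge 0,
\]
so $R_1\in\mathcal{Q}$ and the bound is attained. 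The analogous computation handles $R_2$.

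For $R_3$, the invariance is under flipping both coordinates. This gives directly $D(R_3)=D((R_3)_{\bar X\bar Y})$, which is \eqref{eq:boundary equality}. It also gives $(R_3)_{X\bar Y}=(R_3)_{\bar XY}$, which is \eqref{eq:boundary equality2}. The membership check is $D((R_3)_{X\bar Y})-D(R_3)=2(a'-b')\log(a'/b')\ge0$, where $a'\propto\sqrt{\theta_{00}\theta_{11}}$ and $b'\propto\sqrt{\theta_{01}\theta_{10}}$. Thus $\Qxy^*=R_3$.

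The main obstacle is the bookkeeping in these membership checks and in the boundary cases. One must confirm the sign of the divergence differences and handle the situation where one of $a,b$ vanishes. When $|\operatorname{supp}(P_{XY})|=3$, exactly one of $a,b$ is zero. In that case the difference is $+\infty$, because the flipped type puts mass outside $\operatorname{supp}(P_{XY})$, so membership in $\mathcal{Q}$ still holds under the paper's convention $D=\infty$ for such types. One must also make sure that $R_i$ indeed lies in $\mathcal{P}(\operatorname{supp}(P_{XY}))$.
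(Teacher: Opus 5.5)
Your proof is correct, but it takes a different route from the paper's.

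The paper first relaxes to the single constraint $\Dxy\le\Dbxby$. It then uses Lemma~\ref{lem:boundary} (convexity plus the intermediate value theorem) to push the minimizer onto the boundary $\Dxy=\Dbxby$, which is affine in $Q_{XY}$. Solving the Lagrangian stationarity conditions gives the tilted family $K\,P_{XY}^{\mu}P_{\bar X\bar Y}^{1-\mu}$; the paper picks $\mu=1/2$ and appeals to convexity for global optimality. The case $|\operatorname{supp}(P_{XY})|=3$ must then be redone on a lower-dimensional face, because Lemma~\ref{lem:boundary} requires equal supports.

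Your averaging identity $\tfrac12[D(Q\|P)+D(Q\|P')]=D(Q\|R)-\log Z$ gives the lower bound in one line, with no boundary lemma and no KKT argument. If $D(Q\|P')=\infty$ the bound is trivial; otherwise $Q$ lies inside $\operatorname{supp}(R)$, so the identity is legitimate. This handles the three-point support uniformly, since the support enters only through the $+\infty$ in your membership check.

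Your argument also shows the minimizer is unique, because equality forces $D(Q\|R)=0$. That is handy later: the density-of-types argument invokes \eqref{eq:boundary equality} and \eqref{eq:boundary equality2} for whichever minimizer $Q^\ast$ it picks. Your membership verification $2(a-b)\log(a/b)\ge 0$ is the same computation as the paper's.

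What the Lagrangian route buys is generality. The weight $1/2$ in your averaging step is tight only because the flip symmetry makes the normalized geometric mean flip-invariant, so it lands on the boundary with equality. For an asymmetric pair of distributions you would have to optimize the weight (Chernoff rather than Bhattacharyya), which is exactly what the tilted family does.
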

    The way to solve the optimization problem is given in Appendix \ref{prf:optimization}.

    The condition 
    \begin{equation}\label{cond:negative cond in terms of divergence}
        \Dbxbymin < \min\{\Dxbymin, \Dbxymin\}
    \end{equation}
    determines the improvement of $m$-letter AH coding over SW coding.
    In fact, we have
    \begin{align*}
            \Dbxbymin < \Dxbymin
            &\Leftrightarrow
            \sqrt{\theta_{00}\theta_{11}} + \sqrt{\theta_{01}\theta_{10}} > \sqrt{\theta_{00}\theta_{01}} + \sqrt{\theta_{10}\theta_{11}}\\
            &\Leftrightarrow
            (\sqrt{\theta_{00}}-\sqrt{\theta_{10}})(\sqrt{\theta_{01}}-\sqrt{\theta_{11}}) < 0\\
            &\Leftrightarrow
            (\theta_{00}-\theta_{10})(\theta_{01}-\theta_{11}) < 0.
    \end{align*}
    The same argument applies to the other inequality $\Dbxbymin < \Dbxymin$.
    Therefore, the condition \eqref{cond:negative cond in terms of divergence} coincides with
    the condition \eqref{cond:mAH improves SW}.

    Now we check that $C^{(m)}$ is negative for sufficiently large $m$ under condition \eqref{cond:negative cond in terms of divergence}.
    For every $\Qxy\in\mathcal Q_m$, we have
    \begin{align*}
        \Dxby
        &\geq
        \Dxbymin,\\
        \Dbxy
        &\geq
        \Dbxymin.
    \end{align*}
    Note that left-hand side is the divergence between a type and $P_{XY}$, 
    while the right-hand side is the minimum divergence over $\mathcal{Q}$ 
    which is subset of general distribution.\footnote{
        Even though $Q_m$ contains joint types $Q_{XY}$ such that
    $\operatorname{supp}(Q_{XY}) \not\subseteq \operatorname{supp}(P_{XY})$, 
    $D(\Qxby)=\infty$ and $D(\Qbxy)=\infty$ for such $Q_{XY}$.
    }
    The positive terms of \eqref{eq:upper bound of C} in Lemma \ref{lem:upper bound of C} are bounded as
    \begin{align*}
        \sum_{\Qxy\in\mathcal Q_m}
        8\left(2+m\log\frac{1}{\theta^\star}\right)
        \left(
            2^{-m\Dxby}
            +
            2^{-m\Dbxy}
        \right)
        &\leq
        16|\mathcal Q_m|
        \left(2+m\log\frac{1}{\theta^\star}\right)
        2^{-m\min\{\Dxbymin,\Dbxymin\}}.
    \end{align*}
    On the other hand, by the density of types argument in Appendix \ref{prf:density of types},
    there exist a type $Q_{m,XY}\in\mathcal Q_m$ and $\varepsilon_m$ such that $\varepsilon_m \to 0$ as $m \to \infty$ and
    \begin{align*}
        D(Q_{m,\bar X\bar Y})
        \leq
        \Dbxbymin+\varepsilon_m.
    \end{align*}
    Therefore, keeping only the negative term corresponding to $Q_{m,XY}$, we obtain
    \begin{align*}
        C^{(m)}
        &\leq
        16|\mathcal Q_m|
        \left(2+m\log\frac{1}{\theta^\star}\right)
        2^{-m\min\{\Dxbymin,\Dbxymin\}}
        -
        \frac{1}{16(m+1)^4}
        2^{-m(\Dbxbymin+\varepsilon_m)}\\
        &\leq
        16(m+1)^4
        \left(2+m\log\frac{1}{\theta^\star}\right)
        2^{-m\min\{\Dxbymin,\Dbxymin\}}
        -
        \frac{1}{16(m+1)^4}
        2^{-m(\Dbxbymin+\varepsilon_m)}\\
        &=
        \frac{2^{-m(\Dbxbymin+\varepsilon_m)}}{16(m+1)^4}
        \left[
            256(m+1)^8
            \left(2+m\log\frac{1}{\theta^\star}\right)
            2^{-m(\min\{\Dxbymin,\Dbxymin\}-\Dbxbymin-\varepsilon_m)}
            -
            1
        \right].
    \end{align*}
    Since
    \begin{align*}
        \min\{\Dxbymin,\Dbxymin\}-\Dbxbymin>0
    \end{align*}
    and $\varepsilon_m\to0$ as $m\to\infty$, the term in brackets is negative for every sufficiently large $m$.
    Therefore, there exists $m_0$ such that
    \begin{align*}
        C^{(m)}<0
    \end{align*}
    for every $m\geq m_0$.

%% file: sections/conclusion.tex
\section{Conclusion}
  In this paper, we analytically characterized the condition under which the $m$-letter AH scheme improves upon the SW sum rate for the modulo-sum problem. 
  In previous work, the improvement region had only been observed through numerical computations. 
  In contrast, we derived an explicit condition by applying the method of types to the rate difference between $m$-letter AH and SW coding.
  
  More specifically, we rewrote the rate difference as a sum over joint types and reduced the sign analysis to a comparison of divergence exponents. 
  Since the divergence is a convex and tractable quantity, this reduction enabled us to determine the improvement region explicitly. 
  As a consequence, we showed that the known sufficient condition for the optimality of the SW sum rate in the modulo-sum problem is also necessary.

  There remain several directions for future work. First, the achievable sum rate depends on the choice of the auxiliary random variables. 
  In this paper, we considered the auxiliary random variables induced by the parity of adjacent source symbols. 
  However, it is not clear whether this construction is optimal. 
  Finding better auxiliary random variables may lead to further improvements in the achievable sum rate and determining the optimal rate region of the modulo-sum problem.

%% file: sections/appendix.tex
\appendix
    \subsection{Proof of Lemma \ref{lem:type expression of C}} \label{prf:type expression of C}
        
    We analyze the two entropy terms appearing in $C^{(m)}$.
    For simplicity, we write conditional entropy $H(\Xm,\Ym|\Um=\um,\Vm=\vm)$ as $H(\Xm,\Ym|\um,\vm)$, and $H(\Xm,\Ym|\Um=\um,\Vm=\vm,Z_m=z_m)$ as $H(\Xm,\Ym|\um,\vm,z_m)$.
    
    For the first term of \eqref{eq:C}, we have
    \begin{align*}
        &H(\Xm,\Ym|\Um,\Vm)\\
        &= \sum_{ \um , \vm } P_{\Um\Vm}( \um , \vm ) H(\Xm,\Ym| \um , \vm )\\
        &= \sum_{ \um , \vm } \sum_{( \xm , \ym )\in \mathcal{A}( \um , \vm )} P_{\Xm\Ym}( \xm , \ym ) H(\Xm,\Ym| \um , \vm ) \\
        &= \sum_{ \um , \vm } \sum_{( \xm , \ym )\in \mathcal{A}( \um , \vm )} P_{\Xm\Ym}( \xm , \ym ) H(\Xm,\Ym| f ( \xm ), f ( \ym ))\\
        &= \sum_{ \xm , \ym }P_{\Xm\Ym}( \xm , \ym ) H(\Xm,\Ym| f ( \xm ), f ( \ym ))\\
        &= \sum_{Q_{XY}\in \mathcal{P}_m} \sum_{( \xm , \ym )\in \mathcal{T}_{Q_{XY}} } P_{\Xm\Ym}( \xm , \ym )H(\Xm,\Ym| f ( \xm ), f ( \ym ))
    \end{align*}
    where $f$ is the function from a source sequence to the corresponding auxiliary sequence, defined as
    \begin{equation*}
        f(\xm)
        =
        (x_1\oplus x_2,\ldots,x_{m-1}\oplus x_m),
    \end{equation*}
    and \(f(\ym)\) is defined in the same way.
    Similarly, for the second term of \eqref{eq:C}, we have
    \begin{align*}
        &H(\Xm,\Ym|\Um,\Vm,Z_m)\\
        &= \sum_{ \um , \vm ,z_m} P_{\Um\Vm Z_m}( \um , \vm ,z_m) H(\Xm,\Ym| \um , \vm ,z_m)\\
        &= \sum_{ \um , \vm } P_{\Um\Vm}( \um , \vm ) \sum_{z_m} P_{Z_m|\Um\Vm}(z_m| \um , \vm ) H(\Xm,\Ym| \um , \vm ,z_m)\\
        &= \sum_{ \um , \vm } P_{\Um\Vm}( \um , \vm ) H(\Xm,\Ym| \um , \vm ,Z_m),\\
        \intertext{    where $H(\Xm,\Ym| \um , \vm ,Z_m)=\sum_{z_m} P_{\Zm|\Um\Vm}(z_m| \um , \vm ) H(\Xm,\Ym| \um , \vm ,z_m)$.}
        \intertext{Considering the same manner as the first term of \eqref{eq:C}, we get}
        &\sum_{ \um , \vm } P_{\Um\Vm}( \um , \vm ) H(\Xm,\Ym| \um , \vm ,Z_m)\\
        &= \sum_{Q_{XY}\in \mathcal{P}_m} \sum_{( \xm , \ym )\in \mathcal{T}_{Q_{XY}} } P_{\Xm\Ym}( \xm , \ym )H(\Xm,\Ym| f ( \xm ), f ( \ym ),Z_m).
    \end{align*}
    To evaluate the above entropies, we need the following probabilities:
    \begin{equation}
        \begin{aligned}\label{eq:probabilities}
        P_{\Xm\Ym|\Um\Vm}( \xm , \ym | \fx , \fy ) &= \frac{P_{\Xm\Ym}( \xm , \ym )}{\sum_{( \xm' , \ym' )\in \mathcal{A}( \fx , \fy )} P_{\Xm\Ym}( \xm' , \ym' )}\\
        &\overset{(a)}{=} \frac{2^{-m(H(\tQxy)+D(\tQxy||P_{XY}))}}{\sum_{( \xm' , \ym' )\in \mathcal{A}( \fx , \fy )} 2^{-m(H(\tQxyd)+D(\tQxyd||P_{XY}))}}\\
        &\overset{(b)}{=} \frac{2^{-mD(\tQxy||P_{XY})}}{\sum_{( \xm' , \ym' )\in \mathcal{A}( \fx , \fy )} 2^{-mD(\tQxyd||P_{XY})}}\\
        &=\frac{2^{-mD(\tQxy||P_{XY})}}{S(\tQxy)},\\
        P_{Z_m|\Um\Vm}( z_m | \fx , \fy ) &= \frac{\sum_{( \xm' , \ym' )\in \mathcal{B}( \fx , \fy ,z_m)} P_{\Xm\Ym}( \xm' , \ym' )}{\sum_{( \xm' , \ym' )\in \mathcal{A}( \fx , \fy )} P_{\Xm\Ym}( \xm' , \ym' )}\\
        &\overset{(c)}{=} \frac{\sum_{( \xm' , \ym' )\in \mathcal{B}( \fx , \fy ,z_m)} 2^{-m(H(\tQxyd)+D(\tQxyd||P_{XY}))}}{\sum_{( \xm' , \ym' )\in \mathcal{A}( \fx , \fy )} 2^{-m(H(\tQxyd)+D(\tQxyd||P_{XY}))}}\\
        &\overset{(d)}{=} \frac{\sum_{( \xm' , \ym' )\in \mathcal{B}( \fx , \fy ,z_m)} 2^{-mD(\tQxyd||P_{XY})}}{\sum_{( \xm' , \ym' )\in \mathcal{A}( \fx , \fy )} 2^{-mD(\tQxyd||P_{XY})}}\\
        &= 
        \begin{cases}
            \frac{2^{-mD(\tQxy||P_{XY})}+2^{-mD(\tQbxby||P_{XY})}}{S(\tQxy)} &\text{if } z_m=g(\xm,\ym),\\
            \frac{2^{-mD(\tQxby||P_{XY})}+2^{-mD(\tQbxy||P_{XY})}}{S(\tQxy)} &\text{if }z_m\neq g(\xm,\ym),
        \end{cases}\\
        P_{\Xm\Ym|\Um\Vm Z_m}( \xm , \ym | \fx , \fy , g(\xm,\ym)) &= \frac{P_{\Xm\Ym}( \xm , \ym )}{\sum_{( \xm' , \ym' )\in \mathcal{B}( \fx , \fy ,z_m)} P_{\Xm\Ym}( \xm' , \ym' )}\\
        &\overset{(e)}{=} \frac{2^{-m(H(\tQxy)+D(\tQxy||P_{XY}))}}{\sum_{( \xm' , \ym' )\in \mathcal{B}( \fx , \fy ,z_m)} 2^{-m(H(\tQxyd)+D(\tQxyd||P_{XY}))}}\\
        &\overset{(f)}{=} \frac{2^{-mD(\tQxy||P_{XY})}}{\sum_{( \xm' , \ym' )\in \mathcal{B}( \fx , \fy ,z_m)} 2^{-mD(\tQxyd||P_{XY})}}\\
        &= \frac{2^{-mD(\tQxy||P_{XY})}}{2^{-mD(\tQxy||P_{XY})}+2^{-mD(\tQbxby||P_{XY})}}.\\
        \end{aligned}
    \end{equation}
    where $S(\tQxy)$ is defined in \eqref{eq:S},
    $g$ is the function defined as $g(\xm,\ym)=x_m\oplus y_m$,
    (a), (c), and (e) follow from a standard result in the method of types \cite[Lemma 2.6]{Csiszár_Körner_2011}, which states that the probability of a sequence $(\xm, \ym)$ with joint type $\tQxy$ under $P_{XY}$ is given by
    \begin{equation}\label{eq:probability of type}
        P_{\Xm\Ym}( \xm , \ym ) = 2^{-m(H(\tQxy)+D(\tQxy||P_{XY}))},
    \end{equation}
    and (b), (d), and (f) follow from the fact that $H(\tQxy),H(\tQxby),H(\tQbxy)$ and $H(\tQbxby)$ are equal since the four types are obtained from each other by permutation of the alphabet symbols.

    Substituting probability expressions \eqref{eq:probabilities} into the entropies, we get
    \begin{align*}
        H(\Xm,\Ym| \fx,\fy )=H\left[\frac{2^{-m \tDxy }}{S(\tQxy)},\frac{2^{-m \tDxby }}{S(\tQxy)},\frac{2^{-m \tDbxy }}{S(\tQxy)},\frac{2^{-m \tDbxby }}{S(\tQxy)}\right].
    \end{align*}
    Note that $\tDxy,\tDxby,\tDbxy,\tDbxby$ are functions of $\tQxy$.
    In terms of the binary entropy function, this can be expressed as,
    \begin{align*}
        &H(\Xm,\Ym| \fx,\fy )\\
        &= h\left(\frac{2^{-m \tDxy }+ 2^{-m \tDbxby }}{S(\tQxy)}\right) + \left(\frac{2^{-m \tDxy }+ 2^{-m \tDbxby }}{S(\tQxy)}\right)h\left(\frac{2^{-m \tDxy }}{2^{-m \tDxy }+ 2^{-m \tDbxby }}\right) \\
        &\quad + \left(\frac{2^{-m \tDxby }+ 2^{-m \tDbxy }}{S(\tQxy)}\right)h\left(\frac{2^{-m \tDxby }}{2^{-m \tDxby }+ 2^{-m \tDbxy }}\right).
    \end{align*}
    Similarly,
    \begin{align*}
        &H(\Xm,\Ym| \fx,\fy ,Z_m)\\
        &= \sum_{z_m} P_{Z_m|\Um\Vm}(z_m| \fx,\fy ) H(\Xm,\Ym| \fx,\fy ,z_m)\\
        &= \sum_{z_m} P_{Z_m|\Um\Vm}(z_m| \fx,\fy )\sum_{ (\xm' , \ym')  \in \mathcal{B}(\fx,\fy,z_m)} P_{\Xm\Ym|\Um\Vm Z_m}( \xm' , \ym' | \fx,\fy ,z_m) \log \frac{1}{P_{\Xm\Ym|\Um\Vm Z_m}( \xm' , \ym' | \fx,\fy ,z_m)}\\
        &= \sum_{z_m} P_{Z_m|\Um\Vm}(z_m| \fx,\fy )\sum_{ (\xm' , \ym')  \in \mathcal{B}(\fx,\fy,z_m)} P_{\Xm\Ym|\Um\Vm Z_m}( \xm' , \ym' | f(\xm'),f(\ym') ,z_m) \log \frac{1}{P_{\Xm\Ym|\Um\Vm Z_m}( \xm' , \ym' | f(\xm'),f(\ym'),z_m)}\\
        &= \left( \frac{2^{-m\tDxy}+2^{-m\tDbxby}}{S(\tQxy)} \right)h\left( \frac{2^{-m \tDxy }}{2^{-m \tDxy }+ 2^{-m \tDbxby }}\right) + \left( \frac{2^{-m\tDxby}+2^{-m\tDbxy}}{S(\tQxy)} \right)h\left( \frac{2^{-m \tDxby }}{2^{-m \tDxby }+ 2^{-m \tDbxy }}\right).
    \end{align*}
    Thus, we have
    \begin{align*}
        C^{(m)} &= \sum_{\Qxy}\sum_{( \xm , \ym )\in \mathcal{T}_{\Qxy}} P_{\Xm\Ym}( \xm , \ym )\left\{  h\left( \frac{2^{-m\tDxy}+2^{-m\tDbxby}}{S(\tQxy)} \right)\right.\\
        &\quad-\left(\frac{2^{-m\tDxy}+2^{-m\tDbxby}}{S(\tQxy)}\right)h\left(\frac{2^{-m\tDxy}}{2^{-m\tDxy}+2^{-m\tDbxby}}\right)\\
        &\quad\left.-\left(\frac{2^{-m\tDxby}+2^{-m\tDbxy}}{S(\tQxy)}\right)h\left(\frac{2^{-m\tDxby}}{2^{-m\tDxby}+2^{-m\tDbxy}}\right)\right\}\\
        &= \sum_{\Qxy} P(\mathcal{T}_{\Qxy}) \left\{ h\left( \frac{2^{-m\Dxy}+2^{-m\Dbxby}}{S(\Qxy)} \right)\right.\\
        &\quad-\left(\frac{2^{-m\Dxy}+2^{-m\Dbxby}}{S(\Qxy)}\right)h\left(\frac{2^{-m\Dxy}}{2^{-m\Dxy}+2^{-m\Dbxby}}\right)\\
        &\quad\left.-\left(\frac{2^{-m\Dxby}+2^{-m\Dbxy}}{S(\Qxy)}\right)h\left(\frac{2^{-m\Dxby}}{2^{-m\Dxby}+2^{-m\Dbxy}}\right)\right\}.
    \end{align*}
    Substituting $C_{\mathrm{i}}(\Qxy)$ defined in \eqref{eq:C_i}, finally, we get type expression of $C^{(m)}$ as \eqref{eq:type expression of C^{(m)}}.
    \qed

    \subsection{Proof of Lemma \ref{lem:upper bound of C}}\label{prf:upper bound of C}
    Although $P_{XY}$ may not have full support, we take
    $\mathcal{P}_m$ to be the set of all joint types on
    $\mathcal{X}\times\mathcal{Y}$.  If a type is not absolutely
    continuous with respect to $P_{XY}$, its divergence from
    $P_{XY}$ is defined to be infinity, and we use the convention
    $2^{-m\infty}=0$.  With this convention, the set
    $\mathcal{P}_m$ is closed under flipping $X$, $Y$, or both.
    Therefore, the four types
    \[
        Q_{XY},\quad Q_{X\bar Y},\quad Q_{\bar X Y},\quad
        Q_{\bar X\bar Y}
    \]
    can be treated as one equivalence class even when $P_{XY}$ does
    not have full support.

    It is clear that $C_{\mathrm{i}}(\Qxy),C_{\mathrm{i}}(\Qxby),C_{\mathrm{i}}(\Qbxy),C_{\mathrm{i}}(\Qbxby)$ are equal.
    For each equivalence class  $[\Qxy]=\{\Qxy,\Qxby,\Qbxy,\Qbxby\}$, we can choose $\Qxy^* \in [\Qxy]$ satisfying\footnote{
        When all the four divergences are infinity, any choice will do.
    }
    \begin{align*}
        D(\Qxy^*) \leq \min[D(\Qxby^*),D(\Qbxy^*),D(\Qbxby^*)].         
    \end{align*}
    Considering that flipped types have same type-class size and the same entropy, $\Qxy^*$ satisfies the following by the probability of type class expression \eqref{eq:probability of type}:
    \begin{equation}\label{eq:inequality of probability of type class }
        P(\mathcal{T}_{\Qxy^*})\leq \left(P(\mathcal{T}_{\Qxy})+P(\mathcal{T}_{\Qxby})+P(\mathcal{T}_{\Qbxy})+P(\mathcal{T}_{\Qbxby})\right)\leq 4P(\mathcal{T}_{\Qxy^*}).
    \end{equation} 
    Using this inequality, $C^{(m)}$ can be bounded from above by an expression involving only $\Qxy^*$ rather than all $\Qxy$ as follows.
    First, note that
    \begin{align}
        C^{(m)} &= \sum_{\Qxy \in \mathcal{P}_m} P(\mathcal{T}_{\Qxy}) C_{\mathrm{i}}(\Qxy)\notag\\ 
        &= \frac{1}{4}\sum_{\Qxy \in \mathcal{P}_m} \left\{P(\mathcal{T}_{\Qxy})C_{\mathrm{i}}(\Qxy)+P(\mathcal{T}_{\Qxby})C_{\mathrm{i}}(\Qxby)+P(\mathcal{T}_{\Qbxy})C_{\mathrm{i}}(\Qbxy)+P(\mathcal{T}_{\Qbxby})C_{\mathrm{i}}(\Qbxby)\right\}\notag\\
        &\overset{(a)}{=} \frac{1}{4}\sum_{\Qxy \in \mathcal{P}_m} \left\{P(\mathcal{T}_{\Qxy})+P(\mathcal{T}_{\Qxby})+P(\mathcal{T}_{\Qbxy})+P(\mathcal{T}_{\Qbxby})\right\} C_{\mathrm{i}}(\Qxy^*),\notag
    \end{align}
    where (a) follows from the fact that $C_{\mathrm{i}}(\Qxy),C_{\mathrm{i}}(\Qxby),C_{\mathrm{i}}(\Qbxy),C_{\mathrm{i}}(\Qbxby)$ are equal.
    Next, we apply \eqref{eq:inequality of probability of type class } to the positive term of $C_{\mathrm{i}}(\Qxy^*)$ to get
    \begin{align*}
        &\left\{P(\mathcal{T}_{\Qxy})+P(\mathcal{T}_{\Qxby})+P(\mathcal{T}_{\Qbxy})+P(\mathcal{T}_{\Qbxby}) \right\} h \left(\frac{2^{-m\Dxyas}+2^{-m\Dbxbyas}}{S(\Qxy^*)}\right)\\
        &\leq 4P(\mathcal{T}_{\Qxy^*}) h\left( \frac{2^{-m\Dxyas}+2^{-m\Dbxbyas}}{S(\Qxy^*)} \right).
    \end{align*}
    On the other hand, we apply \eqref{eq:inequality of probability of type class } to the negative terms of $C_{\mathrm{i}}(\Qxy^*)$ to get
    \begin{align*}
        &\left\{P(\mathcal{T}_{\Qxy})+P(\mathcal{T}_{\Qxby})+P(\mathcal{T}_{\Qbxy})+P(\mathcal{T}_{\Qbxby}) \right\} \left\{\left(\frac{2^{-m\Dxyas}+2^{-m\Dbxbyas}}{S(\Qxy^*)}\right)h\left(\frac{2^{-m\Dxyas}}{2^{-m\Dxyas}+2^{-m\Dbxbyas}}\right)\right.\\
        &\quad\left.+\left(\frac{2^{-m\Dxbyas}+2^{-m\Dbxyas}}{S(\Qxy^*)}\right)h\left(\frac{2^{-m\Dxbyas}}{2^{-m\Dxbyas}+2^{-m\Dbxyas}}\right)\right\}\\
        & \geq P(\mathcal{T}_{\Qxy^*}) \left\{\left(\frac{2^{-m\Dxyas}+2^{-m\Dbxbyas}}{S(\Qxy^*)}\right)h\left(\frac{2^{-m\Dxyas}}{2^{-m\Dxyas}+2^{-m\Dbxbyas}}\right)\right.\\
        &\quad\left.+\left(\frac{2^{-m\Dxbyas}+2^{-m\Dbxyas}}{S(\Qxy^*)}\right)h\left(\frac{2^{-m\Dxbyas}}{2^{-m\Dxbyas}+2^{-m\Dbxyas}}\right)\right\}
    \end{align*}

    Since the upper bounds only involve $\Qxy^*$ rather than all $\Qxy$, we can restrict the range of summation to $\mathcal{Q}_m$ defined in \eqref{eq:mathcalQ_m} to obtain\footnote{For the positive terms, even though some of $\Qxy,\Qxby,\Qbxy$and $\Qbxby$ in a class $[\Qxy]$ might  be the same, it suffices to count four times. In fact, $\mathcal{Q}_m$ may contain more than one type from each class, so this upper involves overcounting. For the negative term, we can just omit those types such that $\Qxy \notin \mathcal{Q}_m$.}
    \begin{align}
        C^{(m)}&\leq \sum_{\Qxy \in \mathcal{P}_m} P(\mathcal{T}_{\Qxy^*}) h\left( \frac{2^{-m\Dxyas}+2^{-m\Dbxbyas}}{S(\Qxy^*)} \right)\notag\\
        &\quad -\frac{1}{4}\sum_{\Qxy \in \mathcal{P}_m} P(\mathcal{T}_{\Qxy^*})\left\{\left(\frac{2^{-m\Dxyas}+2^{-m\Dbxbyas}}{S(\Qxy^*)}\right)h\left(\frac{2^{-m\Dxyas}}{2^{-m\Dxyas}+2^{-m\Dbxbyas}}\right)\right.\notag\\
        &\quad \left.+\left(\frac{2^{-m\Dxbyas}+2^{-m\Dbxyas}}{S(\Qxy^*)}\right)h\left(\frac{2^{-m\Dxbyas}}{2^{-m\Dxbyas}+2^{-m\Dbxyas}}\right)\right\}\notag\\
        & \leq \sum_{\Qxy\in \mathcal{Q}_m} P(\mathcal{T}_{\Qxy}) \left\{ 4h\left( \frac{2^{-m\Dxy}+2^{-m\Dbxby}}{S(\Qxy)} \right) - \frac{1}{4}\left(\frac{2^{-m\Dxy}+2^{-m\Dbxby}}{S(\Qxy)}\right)h\left(\frac{2^{-m\Dxy}}{2^{-m\Dxy}+2^{-m\Dbxby}}\right)\right.\notag\\ 
        &\quad \left.-\frac{1}{4}\left(\frac{2^{-m\Dxby}+2^{-m\Dbxy}}{S(\Qxy)}\right)h\left(\frac{2^{-m\Dxby}}{2^{-m\Dxby}+2^{-m\Dbxy}}\right)\right\}. \label{eq:confined type expression}
    \end{align}

    To get the final upper bound, we need to get an upper bound of the positive term and a lower bound of the negative terms of \eqref{eq:confined type expression}.
    First, we get an upper bound of the positive term.
    To apply the right-hand inequality of binary entropy function
    \begin{align*}
        p \log \frac{1}{p} \leq h(p) \leq 2p \log \frac{1}{p}, \quad p \in [0,1/2],
    \end{align*}
    we distinguish two cases.
    \begin{enumerate}
        \item 
        When $\frac{2^{-m\Dxy}+2^{-m\Dbxby}}{S(\Qxy)} \leq 1/2$, we have
        \begin{align*}
            &h\left(\frac{2^{-m\Dxy}+2^{-m\Dbxby}}{S(\Qxy)}\right)\\
            &\overset{(a)}{\leq} 2\left(\frac{2^{-m\Dxy}+2^{-m\Dbxby}}{S(\Qxy)}\right)\log \frac{S(\Qxy)}{2^{-m\Dxy}+2^{-m\Dbxby}}\\
            &\overset{(b)}{\leq} 2\left(\frac{2^{-m\Dxy}+2^{-m\Dbxby}}{2^{-m\Dxy}}\right) \log \frac{S(\Qxy)}{2^{-m\Dxy}+2^{-m\Dbxby}}\\
            &\overset{(c)}{\leq} 2\left(\frac{2^{-m\Dxy}+2^{-m\Dbxby}}{2^{-m\Dxy}}\right) \left(\log \frac{4}{2^{-m\Dxy}}\right)\\
            &= 2\left(\frac{2^{-m\Dxy}+2^{-m\Dbxby}}{2^{-m\Dxy}}\right) \left(2 +m\Dxy\right)\\
            &\overset{(d)}{\leq} 2\left(\frac{2^{-m\Dxy}+2^{-m\Dbxby}}{2^{-m\Dxy}}\right) \left(2 + m \log \frac{1}{\theta^\star} \right)\\
            &\overset{(e)}{\leq} 2\left(\frac{2^{-m\Dxby}+2^{-m\Dbxy}}{2^{-m\Dxy}}\right) \left(2 + m \log \frac{1}{\theta^\star}\right).
        \end{align*}
        where \\
        (a) follows from the upper bound of binary entropy function, \\
        (b) follows from the fact that $S(\Qxy)\geq 2^{-m\Dxy}$, \\
        (c) follows from the fact that $S(\Qxy) \leq 4$, \\
        (d) follows from the fact that $\Dxy \leq \log \frac{1}{\theta^\star}$, where $\theta^\star$ is defined in \eqref{nt:theta star},\\
        (e) follows from the fact that $2^{-m\Dxy}+2^{-m\Dbxby} \leq 2^{-m\Dxby}+2^{-m\Dbxy}$ in this case.

        Note that $\Dxy$ is finite since $\Qxy$ is a type on $\operatorname{supp}(P_{XY})$.
        Therefore, $\Dxy$ can be upper bounded by $\log \frac{1}{\theta^\star}$.
        \item
            When $\frac{2^{-m\Dxy}+2^{-m\Dbxby}}{S(\Qxy)} > 1/2$, we upper bound
            \begin{align*}
                &h\left(\frac{2^{-m\Dxy}+2^{-m\Dbxby}}{S(\Qxy)}\right) = h\left(\frac{2^{-m\Dxby}+2^{-m\Dbxy}}{S(\Qxy)}\right)\\
                \intertext{There is possibility that both $\Dxby$ and $\Dbxy$ are infinite, however, since $h(0)=0$ when both $\Dxby$ and $\Dbxy$ are infinite, we may assume that either $\Dxby$ or $\Dbxy$ is finite. Under this assumption, we get}
                &h\left(\frac{2^{-m\Dxby}+2^{-m\Dbxy}}{S(\Qxy)}\right)\\
                &\overset{(a)}{\leq} 2\left(\frac{2^{-m\Dxby}+2^{-m\Dbxy}}{S(\Qxy)}\right)\log \frac{S(\Qxy)}{2^{-m\Dxby}+2^{-m\Dbxy}}\\
                &\overset{(b)}{\leq} 2\left(\frac{2^{-m\Dxby}+2^{-m\Dbxy}}{2^{-m\Dxy}}\right) \log \frac{4}{2^{-m\Dxby}+2^{-m\Dbxy}},\\
            \end{align*}
            where \\
            (a) follows from the upper bound of binary entropy function, \\ 
            (b) follows from the fact that $S(\Qxy) \leq 4$ and $S(\Qxy)\geq 2^{-m\Dxy}$.\\
            Let $D^\star$ denote the largest finite divergence among $\Dxby$ and $\Dbxy$.
            Then, we can further upper bound as 
            \begin{align*}            
                &2\left(\frac{2^{-m\Dxby}+2^{-m\Dbxy}}{2^{-m\Dxy}}\right) \log \frac{4}{2^{-m\Dxby}+2^{-m\Dbxy}}\\
                &\overset{(c)}{\leq} 2\left(\frac{2^{-m\Dxby}+2^{-m\Dbxy}}{2^{-m\Dxy}}\right) \left(2 + mD^\star\right)\\
                &\overset{(d)}{\leq} 2\left(\frac{2^{-m\Dxby}+2^{-m\Dbxy}}{2^{-m\Dxy}}\right) \left(2 + m \log \frac{1}{\theta^\star}\right),
            \end{align*}
            where\\
            (c) follows from the fact that $2^{-m\Dxby}+2^{-m\Dbxy}\geq 2^{-mD^\star}$, \\
            (d) follows from the fact that $D^\star \leq \log \frac{1}{\theta^\star}$.
    \end{enumerate}
    Thus, in both cases, we obtain the upper bound as
    \begin{equation}\label{eq:upper bound of positive term}
        h\left( \frac{2^{-m\Dxy}+2^{-m\Dbxby}}{S(\Qxy)} \right) \leq 2\left(\frac{2^{-m\Dxby}+2^{-m\Dbxy}}{2^{-m\Dxy}}\right) \left(2 + m \log \frac{1}{\theta^\star}\right).
    \end{equation}

    We next get the lower bound of negative terms of \eqref{eq:confined type expression} as
    \begin{align}
        &\left(\frac{2^{-m\Dxy}+2^{-m\Dbxby}}{S(\Qxy)}\right)h\left(\frac{2^{-m\Dbxby}}{2^{-m\Dxy}+2^{-m\Dbxby}}\right)+\left(\frac{2^{-m\Dxby}+2^{-m\Dbxy}}{S(\Qxy)}\right)h\left(\frac{2^{-m\Dxby}}{2^{-m\Dxby}+2^{-m\Dbxy}}\right)\notag\\
        & \overset{(a)}{\geq}\left(\frac{2^{-m\Dxy}+2^{-m\Dbxby}}{S(\Qxy)}\right)h\left(\frac{2^{-m\Dbxby}}{2^{-m\Dxy}+2^{-m\Dbxby}}\right) \notag\\
        &\overset{(b)}{\geq} \left(\frac{2^{-m\Dxy}+2^{-m\Dbxby}}{S(\Qxy)}\right) \left(\frac{2^{-m\Dbxby}}{2^{-m\Dxy}+2^{-m\Dbxby}}\right) \log \frac{2^{-m\Dxy}+2^{-m\Dbxby}}{2^{-m\Dbxby}}\notag\\
        &\overset{(c)}{\geq}\left(\frac{2^{-m\Dbxby}}{S(\Qxy)}\right) \log \frac{2\cdot 2^{-m\Dbxby}}{2^{-m\Dbxby}}\notag\\
        &= \left(\frac{2^{-m\Dbxby}}{S(\Qxy)}\right) \cdot 1\notag\\
        &\overset{(d)}{\geq} \left(\frac{2^{-m\Dbxby}}{4\cdot 2^{-m\Dxy}}\right) \notag\\
        &= \frac{2^{-m(\Dbxby-\Dxy)}}{4}, \label{eq:lower bound of negative terms}
    \end{align}
    where \\
    (a) follows from the non-negativity of the second term, \\
    (b) follows from the lower bound of binary entropy function, \\
    (c) and (d) follow from $\Dxy$ is the smallest among $\Dxy,\Dxby,\Dbxy$ and $\Dbxby$ for $\Qxy \in \mathcal{Q}_m$.

    Since the lower bound is 0 when $\Dbxby$ is infinite, the lower bound is also valid for such a case.

    Note that the probability of type class $P(\mathcal{T}_{\Qxy})$ can be bounded as \cite[Lemma 2.6]{Csiszár_Körner_2011}
    \begin{equation}\label{eq:probability of type class}
        \frac{2^{-m\Dxy} }{(m+1)^4}\leq P(\mathcal{T}_{\Qxy}) \leq 2^{-m\Dxy}.
    \end{equation}
    
    Thus, substituting the bounds from \eqref{eq:upper bound of positive term}, \eqref{eq:lower bound of negative terms} and \eqref{eq:probability of type class} into \eqref{eq:confined type expression}, we get the upper bound of $C^{(m)}$ as \eqref{eq:upper bound of C}.
    \qed

\subsection{Preparation for proving lemma \ref{lem:minimum divergence}}
    In this section, we introduce a lemma that will be used in the proof of Lemma \ref{lem:minimum divergence}.
    \begin{lemma}\label{lem:boundary}
    Let $P$ and $P'$ be two probability distributions on the same finite alphabet and suppose that $\operatorname{supp}(P) = \operatorname{supp}(P')$.
    Consider the minimization of $D(Q\|P')$ over all probability distributions $Q$ satisfying
    \begin{align*}
        D(Q\|P)\leq D(Q\|P').
    \end{align*}
    Then the minimum is attained on the boundary
    \begin{align*}
        D(Q\|P)=D(Q\|P').
    \end{align*}
    \end{lemma}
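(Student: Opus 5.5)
The plan rests on one observation: the constraint set is a half-space intersected with the simplex, because the difference of the two divergences is linear in $Q$. Restricting to $Q$ with $\operatorname{supp}(Q)\subseteq\operatorname{supp}(P)=\operatorname{supp}(P')$, we have
\begin{align*}
    D(Q\|P)-D(Q\|P')=\sum_{a}Q(a)\log\frac{P'(a)}{P(a)}=:L(Q).
\end{align*}
The feasible set is $\{Q: L(Q)\le 0\}$. Distributions $Q$ with support outside $\operatorname{supp}(P)$ give $D(Q\|P')=\infty$ and can be discarded.

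\textbf{Existence of a minimizer.} The restricted feasible set is a closed, convex subset of the simplex on $\operatorname{supp}(P)$, hence compact. On this simplex $D(\cdot\|P')$ is finite and continuous. The set is nonempty: it contains $P$, since $L(P)=-D(P\|P')\le 0$. So a minimizer $Q^*$ exists.

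\textbf{Trivial case.} If $P=P'$, then $L\equiv 0$. Every feasible $Q$ lies on the boundary and there is nothing to prove, so from now on assume $P\neq P'$.

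\textbf{Main step: the minimizer cannot be strictly interior.} Suppose for contradiction that $L(Q^*)<0$.
\begin{enumerate}[label=(\roman*)]
    \item The unconstrained minimizer $P'$ of $D(\cdot\|P')$ is infeasible, since $L(P')=D(P'\|P)>0$. Hence $Q^*\neq P'$ and $D(Q^*\|P')>0$.
    \item Set $Q_t=(1-t)Q^*+tP'$ for $t\in(0,1]$. By convexity of divergence in its first argument,
    \begin{align*}
        D(Q_t\|P')\le (1-t)D(Q^*\|P')+t\cdot 0<D(Q^*\|P').
    \end{align*}
    \item Since $L$ is linear (hence continuous) and $L(Q^*)<0$, we still have $L(Q_t)<0$ for all sufficiently small $t>0$. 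So $Q_t$ is feasible and strictly better than $Q^*$.
\end{enumerate}
This contradicts the minimality of $Q^*$. Therefore $L(Q^*)=0$, i.e., $D(Q^*\|P)=D(Q^*\|P')$.

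The same argument shows more: every minimizer lies on the boundary, and any interior point can be strictly improved.

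\textbf{Main obstacle.} The only delicate point is the support bookkeeping: the infinite-divergence conventions, and ensuring $Q_t$ stays within the common support. Both are handled by the hypothesis $\operatorname{supp}(P)=\operatorname{supp}(P')$, because $P'$ and $Q^*$ both live on that support and so does their mixture $Q_t$.
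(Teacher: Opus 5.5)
Your proof is correct and follows the same idea as the paper. Both arguments suppose a strictly interior minimizer $Q^*$, mix it with $P'$, and use convexity of $D(\cdot\|P')$ together with $D(P'\|P)>0$ to produce a strictly better feasible point. The differences are that you take a small step and keep feasibility via linearity of $L$, whereas the paper uses the intermediate value theorem to land exactly on the boundary, and that you also prove a minimizer exists by compactness, which the paper's argument takes for granted.
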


    \begin{proof}
    The case $P=P'$ is trivial, so we assume that $P\neq P'$.
    Suppose, for contradiction, that the minimum is attained at a distribution $Q^*$ satisfying the strict inequality
    \begin{align*}
        D(Q^*\|P)<D(Q^*\|P').
    \end{align*}
    
    For $0 \leq \lambda \leq 1$, define
    \begin{align*}
        Q_\lambda=\lambda Q^*+(1-\lambda)P'.
    \end{align*}
    By the convexity of divergence in the first argument, for $\lambda \in (0,1)$, we have
    \begin{align}
        D(Q_\lambda\|P')&\leq \lambda D(Q^*\|P')+(1-\lambda)D(P'\|P')\notag\\ 
        &=\lambda D(Q^*\|P')\notag\\
        &< D(Q^*\|P'). \label{eq:divergence to P'}
    \end{align}

    Now define
    \begin{align*}
        F(\lambda)=D(Q_\lambda\|P)-D(Q_\lambda\|P').
    \end{align*}
    Since \(D(Q^*\|P')<\infty\), we have
    \[
    \operatorname{supp}(Q^*)\subseteq \operatorname{supp}(P').
    \]
    By the assumption \(\operatorname{supp}(P)=\operatorname{supp}(P')\),
    it follows that
    \[
    \operatorname{supp}(Q_\lambda)\subseteq \operatorname{supp}(P)
    =\operatorname{supp}(P')
    \]
    for every \(\lambda\in[0,1]\). Therefore, both
    \(D(Q_\lambda\|P)\) and \(D(Q_\lambda\|P')\) are finite and continuous
    in \(\lambda\). Hence, \(F(\lambda)\) is continuous on \([0,1]\).
    Since $\operatorname{supp}(P)=\operatorname{supp}(P')$, we have $D(P'\|P)<\infty$.
    Moreover,
    \begin{align*}
        F(1)=D(Q^*\|P)-D(Q^*\|P')<0,
    \end{align*}
    whereas
    \begin{align*}
        F(0)=D(P'\|P)-D(P'\|P')=D(P'\|P)>0.
    \end{align*}
    Hence, by the intermediate value theorem, there exists $\lambda_0\in(0,1)$ such that
    \begin{align*}
        F(\lambda_0)=0.
    \end{align*}
    That is,
    \begin{align*}
        D(Q_{\lambda_0}\|P)=D(Q_{\lambda_0}\|P').
    \end{align*}
    Furthermore,
    \begin{align*}
        D(Q_{\lambda_0}\|P') < D(Q^*\|P')
    \end{align*}
    by \eqref{eq:divergence to P'}.
    Therefore, there exists a feasible distribution on the boundary whose objective value is strictly smaller than that of $Q^*$.
    This contradicts that $Q^*$ is the minimizer. 
    Thus, the minimum is attained on the boundary
    \begin{align*}
        D(Q\|P)=D(Q\|P').
    \end{align*}
    \end{proof}

    \subsection{Proof of Lemma~\ref{lem:minimum divergence}}\label{prf:optimization}
    We first prove the formulas under the full-support assumption $\theta_{ij}>0$ for all $i,j$.
    If exactly one of the source probabilities is zero, the same formulas are obtained by applying Lemma~\ref{lem:boundary} on the corresponding lower-dimensional faces.

    We first prove \eqref{eq:min Dbxby}.
    We first compute the relaxed minimum of $\Dbxby$ under the single constraint $\Dxy\leq\Dbxby$.
    By Lemma~\ref{lem:boundary}, the minimum is attained on the boundary
    \begin{align*}
        \Dxy=\Dbxby.
    \end{align*}
    Let
    \begin{align*}
        \Qxy
        =
        \begin{bmatrix}
            q_{00} & q_{01}\\
            q_{10} & q_{11}
        \end{bmatrix}.
    \end{align*}
    Since
    \begin{align*}
        \Pxy
        =
        \begin{bmatrix}
            \theta_{00} & \theta_{01}\\
            \theta_{10} & \theta_{11}
        \end{bmatrix},
    \end{align*}
    we have
    \begin{align*}
        \Pbxby
        =
        \begin{bmatrix}
            \theta_{11} & \theta_{10}\\
            \theta_{01} & \theta_{00}
        \end{bmatrix}.
    \end{align*}
    Therefore,
    \begin{align*}
        \Dbxby
        =
        q_{00}\log\frac{q_{00}}{\theta_{11}}
        +
        q_{01}\log\frac{q_{01}}{\theta_{10}}
        +
        q_{10}\log\frac{q_{10}}{\theta_{01}}
        +
        q_{11}\log\frac{q_{11}}{\theta_{00}}.
    \end{align*}
    Also, the boundary condition $\Dxy=\Dbxby$ is equivalent to
    \begin{align*}
        \Dxy-\Dbxby=0.
    \end{align*}
    Since
    \begin{align*}
        \Dxy
        =
        q_{00}\log\frac{q_{00}}{\theta_{00}}
        +
        q_{01}\log\frac{q_{01}}{\theta_{01}}
        +
        q_{10}\log\frac{q_{10}}{\theta_{10}}
        +
        q_{11}\log\frac{q_{11}}{\theta_{11}},
    \end{align*}
    we have
    \begin{align*}
        \Dxy-\Dbxby
        =
        q_{00}\log\frac{\theta_{11}}{\theta_{00}}
        +
        q_{01}\log\frac{\theta_{10}}{\theta_{01}}
        +
        q_{10}\log\frac{\theta_{01}}{\theta_{10}}
        +
        q_{11}\log\frac{\theta_{00}}{\theta_{11}}.
    \end{align*}

    For $\lambda,\mu \in \mathbb{R}$, let
    \begin{align*}
        \mathcal{L}
        =
        \Dbxby
        +
        \lambda(q_{00}+q_{01}+q_{10}+q_{11}-1)
        +
        \mu(\Dxy-\Dbxby)
    \end{align*}
    be the Lagrangian for the optimization problem.

    Substituting the above expressions, we obtain
    \begin{align*}
        \mathcal{L}
        &=
        q_{00}\log\frac{q_{00}}{\theta_{11}}
        +
        q_{01}\log\frac{q_{01}}{\theta_{10}}
        +
        q_{10}\log\frac{q_{10}}{\theta_{01}}
        +
        q_{11}\log\frac{q_{11}}{\theta_{00}}
        \\
        &\quad
        +
        \lambda(q_{00}+q_{01}+q_{10}+q_{11}-1)
        \\
        &\quad
        +
        \mu\left(
            q_{00}\log\frac{\theta_{11}}{\theta_{00}}
            +
            q_{01}\log\frac{\theta_{10}}{\theta_{01}}
            +
            q_{10}\log\frac{\theta_{01}}{\theta_{10}}
            +
            q_{11}\log\frac{\theta_{00}}{\theta_{11}}
        \right).
    \end{align*}

    Since the objective function is convex and the constraints are affine,
    any feasible point satisfying the Lagrangian stationary condition is a global minimizer \cite{BV09}.

    Taking the partial derivatives with respect to $q_{00},q_{01},q_{10},q_{11}$, we get
    \begin{align*}
        \frac{\partial\mathcal{L}}{\partial q_{00}}
        &=
        \log\frac{q_{00}}{\theta_{11}}
        +
        \lambda'
        +
        \mu\log\frac{\theta_{11}}{\theta_{00}}
        =0,
        \\
        \frac{\partial\mathcal{L}}{\partial q_{01}}
        &=
        \log\frac{q_{01}}{\theta_{10}}
        +
        \lambda'
        +
        \mu\log\frac{\theta_{10}}{\theta_{01}}
        =0,
        \\
        \frac{\partial\mathcal{L}}{\partial q_{10}}
        &=
        \log\frac{q_{10}}{\theta_{01}}
        +
        \lambda'
        +
        \mu\log\frac{\theta_{01}}{\theta_{10}}
        =0,
        \\
        \frac{\partial\mathcal{L}}{\partial q_{11}}
        &=
        \log\frac{q_{11}}{\theta_{00}}
        +
        \lambda'
        +
        \mu\log\frac{\theta_{00}}{\theta_{11}}
        =0,
    \end{align*}
    where $\lambda'$ absorbs $\lambda$ and the additive constant coming from the derivative of $q\log q$.

    Solving these equations, we obtain
    \begin{align*}
        q_{00}&=K\theta_{00}^{\mu}\theta_{11}^{1-\mu},\quad
        q_{01}=K\theta_{01}^{\mu}\theta_{10}^{1-\mu},\\
        q_{10}&=K\theta_{10}^{\mu}\theta_{01}^{1-\mu},\quad
        q_{11}=K\theta_{11}^{\mu}\theta_{00}^{1-\mu},
    \end{align*}
    where $K$ is the normalizing constant.

    The boundary condition $\Dxy=\Dbxby$ is satisfied by $\mu=1/2$.
    Hence, the minimizing distribution $\Qxy^*$ is
    \begin{align*}
        q_{00}^*=q_{11}^*
        =
        \frac{\sqrt{\theta_{00}\theta_{11}}}{B_{XY}},\quad
        q_{01}^*=q_{10}^*
        =
        \frac{\sqrt{\theta_{01}\theta_{10}}}{B_{XY}},
    \end{align*}
    where
    \begin{align*}
        B_{XY}
        =
        2\left(
            \sqrt{\theta_{00}\theta_{11}}
            +
            \sqrt{\theta_{01}\theta_{10}}
        \right).
    \end{align*}

    We next check that this distribution is also in $\mathcal{Q}$.

    By construction,
    \begin{align*}
        \Dxyas=\Dbxbyas.
    \end{align*}
    Moreover,
    \begin{align*}
        \Dxbyas-\Dxyas
        &=
        \frac{\sqrt{\theta_{00}\theta_{11}}}{B_{XY}}
        \log\frac{\theta_{00}\theta_{11}}{\theta_{01}\theta_{10}}
        +
        \frac{\sqrt{\theta_{01}\theta_{10}}}{B_{XY}}
        \log\frac{\theta_{01}\theta_{10}}{\theta_{00}\theta_{11}}
        \\
        &=
        \frac{
            \sqrt{\theta_{00}\theta_{11}}
            -
            \sqrt{\theta_{01}\theta_{10}}
        }{
            B_{XY}
        }
        \log
        \frac{\theta_{00}\theta_{11}}{\theta_{01}\theta_{10}}
        \geq 0,
    \end{align*}
    where we used $(x-y)\log(x/y)\geq 0$.  Similarly,
    \begin{align*}
        \Dbxyas-\Dxyas
        &=
        \frac{\sqrt{\theta_{00}\theta_{11}}}{B_{XY}}
        \log\frac{\theta_{00}\theta_{11}}{\theta_{01}\theta_{10}}
        +
        \frac{\sqrt{\theta_{01}\theta_{10}}}{B_{XY}}
        \log\frac{\theta_{01}\theta_{10}}{\theta_{00}\theta_{11}}
        \\
        &=
        \frac{
            \sqrt{\theta_{00}\theta_{11}}
            -
            \sqrt{\theta_{01}\theta_{10}}
        }{
            B_{XY}
        }
        \log
        \frac{\theta_{00}\theta_{11}}{\theta_{01}\theta_{10}}
        \geq 0.
    \end{align*}
    Therefore,
    \begin{align*}
        \Dxyas
        \leq
        \min\{\Dxbyas,\Dbxyas,\Dbxbyas\}.
    \end{align*}
    Thus,
    \begin{align*}
        \Qxy^\ast\in\mathcal{Q}.
    \end{align*}
    Furthermore, we can verify that $Q_{XY}^*$ satisfies \eqref{eq:boundary equality} and \eqref{eq:boundary equality2} from the above argument.
    Consequently, the exponent
    \begin{align*}
        \log\frac{1}{B_{XY}}
    \end{align*}
    is achievable by a distribution in $\mathcal{Q}$.

    Substituting this distribution into $\Dbxby$, we obtain
    \begin{align*}
        \Dbxbymin
        =
        \min_{\Qxy\in\mathcal{Q}} \Dbxby
        =
        \log\frac{1}{B_{XY}}
        =
        \log
        \frac{1}{
            2\left(
                \sqrt{\theta_{00}\theta_{11}}
                +
                \sqrt{\theta_{01}\theta_{10}}
            \right)
        }.
    \end{align*}

    In the same manner, we can compute the minimum of $\Dxby$ and $\Dbxy$.
    Consequently, we have \eqref{eq:min Dxby} and \eqref{eq:min Dbxy}.

    It remains to consider the case where exactly one of the source probabilities is zero. 
    We describe only one representative case, since the other cases are obtained by flipping $X$, $Y$, or both.
    Without loss of generality, assume that
    \begin{align*}
        \theta_{00}=0,
        \qquad
        \theta_{01}=p,
        \qquad
        \theta_{10}=q,
        \qquad
        \theta_{11}=r,
    \end{align*}
    where $p,q,r>0$ and $p+q+r=1$.
    For any type $\Qxy$ that can occur under $\Pxy$, we have
    \begin{align*}
        q_{00}=0.
    \end{align*}
    Since $\Dxy$ is smallest among $\Dxy,\Dxby,\Dbxy$and$\Dbxby$,
    it is enough to consider distributions $\Qxy$ satisfying $\Dxy<\infty$.

    First, consider the minimization of $\Dbxby$.
    Since
    \begin{align*}
        \Pxy
        =
        \begin{bmatrix}
            0 & p\\
            q & r
        \end{bmatrix},
        \qquad
        \Pbxby
        =
        \begin{bmatrix}
            r & q\\
            p & 0
        \end{bmatrix},
    \end{align*}
    the finiteness of $\Dbxby$ further requires
    \begin{align*}
        q_{11}=0.
    \end{align*}
    Hence, the optimization is restricted as
    \begin{align*}
        q_{00}=q_{11}=0,
        \qquad
        q_{01}+q_{10}=1.
    \end{align*}
    On this constraint, the two relevant supports coincide and are equal to $\{(0,1),(1,0)\}$.
    Let $\widetilde{Q}_{XY}$ be the binary distribution induced by $\Qxy$, i.e.,
    \begin{align*}
        \widetilde{Q}_{XY}
        =
        (q_{01},q_{10}).
    \end{align*}
    Also define the normalized restrictions of $\Pxy$ and $\Pbxby$ to this constraint by
    \begin{align*}
        \widetilde{P}_{XY}
        =
        \left(
            \frac{p}{p+q},
            \frac{q}{p+q}
        \right),
        \qquad
        \widetilde{P}'_{XY}
        =
        \left(
            \frac{q}{p+q},
            \frac{p}{p+q}
        \right).
    \end{align*}
    Then, for any distribution $\Qxy$ on this constraint, we have
    \begin{align*}
        \Dxy
        &=
        q_{01}\log\frac{q_{01}}{p}
        +
        q_{10}\log\frac{q_{10}}{q}
        \\
        &=
        D(\widetilde{Q}_{XY}\|\widetilde{P}_{XY})
        -
        \log(p+q),
        \\
        \Dbxby
        &=
        q_{01}\log\frac{q_{01}}{q}
        +
        q_{10}\log\frac{q_{10}}{p}
        \\
        &=
        D(\widetilde{Q}_{XY}\|\widetilde{P}'_{XY})
        -
        \log(p+q).
    \end{align*}
    Thus, Lemma~\ref{lem:boundary} can be applied between $D(\widetilde{Q}_{XY}\|\widetilde{P}'_{XY})$ and $D(\widetilde{Q}_{XY}\|\widetilde{P}_{XY})$.
    The minimizer $\Qxy^*$ is attained on the boundary and is given by
    \begin{align*}
        q_{01}^*=q_{10}^*=\frac{1}{2}.
    \end{align*}
    Substituting this distribution into $\Dbxbyas$, we obtain
    \begin{align*}
        \Dbxbyas
        &=
        \frac{1}{2}\log\frac{1/2}{q}
        +
        \frac{1}{2}\log\frac{1/2}{p}
        \\
        &=
        \log\frac{1}{2\sqrt{pq}}.
    \end{align*}

    We next check that the distribution $\Qxy^*$ is also in $\mathcal{Q}$.
    By construction, $\Dxy=\Dbxby$.

    On the other hand, $\Dxby=\infty$ because $q_{01}>0$ while the
    corresponding probability in $\Pxby$ is zero.  Similarly,
    $\Dbxy=\infty$ because $q_{10}>0$ while the corresponding
    probability in $\Pbxy$ is zero.  Hence,
    \[
        \Dxy
        \leq
        \min\{\Dxby,\Dbxy,\Dbxby\},
    \]
    and therefore this minimizer belongs to $\mathcal{Q}$.
    We can also verify that $Q_{XY}^*$ satisfies \eqref{eq:boundary equality} and \eqref{eq:boundary equality2}.

    The same lower-dimensional argument applies to the minimization of $\Dxby$ and $\Dbxy$.
    Thus, the same formulas as in the full-support case remain valid when exactly one source probability is zero. 
    The cases $\theta_{01}=0$, $\theta_{10}=0$, and $\theta_{11}=0$ follow in the same way after flipping $X$, $Y$, or both.

\subsection{Density of types}\label{prf:density of types}
    \begin{lemma}
        There exists a type $Q_{m,XY}\in\mathcal{Q}_m$ and $\epsilon_m$ such that
        $\epsilon_m\to 0$ as $m\to\infty$
        and 
        \begin{equation}\label{eq:density of types}
            D(Q_{m,XY}\|\Pbxby)
            \leq
            \Dbxbymin + \epsilon_m
        \end{equation}
        for sufficiently large $m$.
    \end{lemma}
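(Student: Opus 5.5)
The plan is to construct $Q_{m,XY}$ explicitly by rounding the minimizer $\Qxy^*$ of Lemma~\ref{lem:minimum divergence}, and to exploit the symmetry $q^*_{00}=q^*_{11}$, $q^*_{01}=q^*_{10}$ so that the defining inequalities of $\mathcal{Q}_m$ survive the rounding. First note that $D(Q_{m,XY}\|\Pbxby)=D(Q_{m,\bar X\bar Y})$. Also, since $\Qxy^*$ and any nearby distribution with the same support lie in the interior of the relevant face of the simplex, where the divergence is continuous, it suffices to find types $Q_{m,XY}\in\mathcal{Q}_m$ with $\|Q_{m,XY}-\Qxy^*\|_1=O(1/m)$ and $\operatorname{supp}(Q_{m,XY})\subseteq\operatorname{supp}(\Qxy^*)$. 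Uniform continuity of $Q\mapsto D(Q_{\bar X\bar Y})$ on that face then gives $\epsilon_m\to0$.

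The key computations are the two linear identities already derived in the proof of Lemma~\ref{lem:minimum divergence}. The first is
\begin{align*}
\Dxy-\Dbxby=(q_{00}-q_{11})\log\frac{\theta_{11}}{\theta_{00}}+(q_{01}-q_{10})\log\frac{\theta_{10}}{\theta_{01}}.
\end{align*}
The second is an analogous expression for $\Dxby-\Dxy$ (and for $\Dbxy-\Dxy$). On symmetric $Q$ this second difference reduces to $(q_{00}-q_{01})\log\frac{\theta_{00}\theta_{11}}{\theta_{01}\theta_{10}}$. For even $m$, take $q_{00}=q_{11}=k/m$ and $q_{01}=q_{10}=(m/2-k)/m$, with $k$ the integer nearest $m q^*_{00}$. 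Round $k$ in the direction that preserves the (weak) order between $q_{00}$ and $q_{01}$ that holds at $\Qxy^*$; if $q^*_{00}=q^*_{01}$, the coefficient $\log\frac{\theta_{00}\theta_{11}}{\theta_{01}\theta_{10}}$ vanishes anyway. Then $\Dxy=\Dbxby$ exactly, and $\Dxby-\Dxy\ge0$ and $\Dbxy-\Dxy\ge0$ by the same sign argument $(x-y)\log(x/y)\ge0$ used for $\Qxy^*$, so $Q_{m,XY}\in\mathcal{Q}_m$.

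For odd $m$ exact symmetry is impossible. Here I would start from the symmetric rounding for $m-1$ and place the extra mass $1/m$ on a cell chosen so that the first identity gives $\Dxy-\Dbxby\le0$: on $(0,0)$ if $\theta_{00}\ge\theta_{11}$, otherwise on $(1,1)$, with the analogous choice among $(0,1),(1,0)$ when that is more convenient. The main obstacle is then keeping $\Dxby\ge\Dxy$ and $\Dbxy\ge\Dxy$, since the perturbation introduces $O(1/m)$ error terms. If $\theta_{00}\theta_{11}\ne\theta_{01}\theta_{10}$ and $q^*_{00}\ne q^*_{01}$ strictly, both differences are bounded below by a positive constant times $|q_{00}-q_{01}|$, so the $O(1/m)$ perturbation is harmless for large $m$. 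In the degenerate case $\theta_{00}\theta_{11}=\theta_{01}\theta_{10}$ I would compute the differences directly: they become $O(1/m)$ multiples of $\log(\theta_{00}/\theta_{01})$-type coefficients, and the free choice of which cell receives the extra mass should fix their sign. Alternatively, one can simply note that in this case condition \eqref{cond:negative cond in terms of divergence} fails, so the lemma is never invoked there.

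Finally, the case $|\operatorname{supp}(P_{XY})|=3$ is handled the same way on the lower-dimensional face used in the proof of Lemma~\ref{lem:minimum divergence}, where $\Qxy^*=(0,\tfrac12,\tfrac12,0)$. For even $m$ this point is itself a type. For odd $m$, take $(0,\tfrac{m\pm1}{2m},\tfrac{m\mp1}{2m},0)$, with the sign chosen so that $\Dxy\le\Dbxby$. The other two divergences are $\infty$, so membership in $\mathcal{Q}_m$ is automatic.
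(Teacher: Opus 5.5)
Your proof is correct, but it takes a different route from the paper's.

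\textbf{The paper's route.} The paper does not try to make the rounded type satisfy the constraints of $\mathcal{Q}_m$ directly. It takes an arbitrary type $Q_m$ with the same support as $Q^*$ and $\|Q_m-Q^*\|\le 4/m$. It bounds all finite divergences $D(Q_m\|P')$, $P'\in\{P_{XY},P_{X\bar Y},P_{\bar XY},P_{\bar X\bar Y}\}$, uniformly by the explicit $\epsilon_m=\frac4m\log m+\frac4m\log\frac1{\theta^\star}$. It then replaces $Q_m$ by whichever flip $Q_m^\pi$ lies in $\mathcal{Q}_m$. Finally it uses only the equalities $D(Q^*\|P_{XY})=D(Q^*\|P_{\bar X\bar Y})$ and $D(Q^*\|P_{X\bar Y})=D(Q^*\|P_{\bar XY})$ to show that every admissible flip keeps $D(\cdot\|P_{\bar X\bar Y})$ within $\epsilon_m$ of the minimum. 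This needs a case split on whether $D(Q^*\|P_{\bar X\bar Y})<D(Q^*\|P_{X\bar Y})$. In the strict case it also needs a large-$m$ argument about which flip is minimal.

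\textbf{What your route buys.} You use the explicit symmetric form of $Q^*$ together with the linear identities for the divergence differences. Membership in $\mathcal{Q}_m$ is then verified by hand and no flip re-selection is needed. For even $m$ you get $D(Q_{XY})=D(Q_{\bar X\bar Y})$ exactly. In the three-point-support case with even $m$ your type even attains the minimum exactly. The cost is the parity split and the sign bookkeeping for odd $m$. Your uniform-continuity argument gives $\epsilon_m\to 0$ without an explicit rate, which is all the lemma requires.

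\textbf{The step you hedged goes through.} Put the extra unit of mass on a cell where $\theta_{ij}$ is maximal. Then each $O(1/m)$ correction in the three differences is $\pm\frac1m\log$ of the ratio of that maximal entry to another entry, and it always carries the required sign:
\begin{itemize}
\item it is $\le 0$ in $D(Q_{XY})-D(Q_{\bar X\bar Y})$;
\item it is $\ge 0$ in $D(Q_{X\bar Y})-D(Q_{XY})$ and in $D(Q_{\bar XY})-D(Q_{XY})$.
\end{itemize}
The symmetric part $\frac{k-j}{m}\log\frac{\theta_{00}\theta_{11}}{\theta_{01}\theta_{10}}$ is $\ge 0$ by your rounding direction. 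So this rule works for every odd $m$, degenerate or not. It removes the large-$m$ argument and makes your fallback appeal to the theorem's hypothesis unnecessary. That fallback would not prove the lemma as stated anyway.
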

    \begin{proof}
    First, we show that  there exists a sequence of joint types $Q_m$ such that $D(Q_m\|\Pbxby)\to D(Q^\ast\|\Pbxby)$ as $m\to\infty$ where
    $Q^\ast\in\mathcal{Q}$ is a minimizer of $D(Q\|\Pbxby)$ over $Q\in\mathcal{Q}$.
    Note that $\operatorname{supp}(Q^*) \subseteq \operatorname{supp}(\Pbxby)$,
    since $D(Q^*\|\Pbxby)$ is infinite otherwise.
    Let $\|\cdot\|$ denote the total variation distance defined as
    \begin{align*}
        \|Q-Q'\|
        =
        \sum_{x,y}
        |Q(x,y)-Q'(x,y)|.
    \end{align*}

    First, note that there exists a sequence of joint types $Q_m$ of length $m$ such that $\operatorname{supp}(Q_m)=\operatorname{supp}(Q^\ast)$ and
    \begin{align*}
        \|Q_m-Q^\ast\|
        \leq
        \frac{|\mathcal{X}\times\mathcal{Y}|}{m}=\frac{4}{m}.
    \end{align*}

    For $P' \in \{\Pxy,\Pxby,\Pbxy,\Pbxby\}$, we show that $D(Q_m\|P')\to D(Q^\ast\|P')$ as $m\to\infty$ as long as $D(Q^\ast\|P')<\infty$.
    The difference can be bounded as
    \begin{align*}
        \left|D(Q_m\|P')-D(Q^\ast\|P')\right|
        &\leq
        \left|H(Q_m)-H(Q^\ast)\right| 
        +
        \left|
            \sum_{x,y}
            \left(Q_m(x,y)-Q^\ast(x,y)\right)
            \log P'(x,y)
        \right|.
    \end{align*}
    By the continuity of entropy \cite[Lemma 2.7]{Csiszár_Körner_2011},
    \begin{align*}
        \left|H(Q_m)-H(Q^\ast)\right|
        \leq
        \|Q_m-Q^\ast\|
        \log
        \frac{|\mathcal{X}\times\mathcal{Y}|}{\|Q_m-Q^\ast\|}.
    \end{align*}
    Moreover, since every probability of $P'$ is at least $\theta^\star$ on $\operatorname{supp}(P')$,
    \begin{align*}
        \left|
            \sum_{x,y}
            \left(Q_m(x,y)-Q^\ast(x,y)\right)
            \log P'(x,y)
        \right|
        \leq
        \|Q_m-Q^\ast\| \log\frac{1}{\theta^\star}.
    \end{align*}
    Hence,
    \begin{align*}
        \left|D(Q_m\|P')-D(Q^\ast\|P')\right|
        \leq
        \|Q_m-Q^\ast\|
        \log
        \frac{|\mathcal{X}\times\mathcal{Y}|}{\|Q_m-Q^\ast\|}
        +
        \|Q_m-Q^\ast\| \log\frac{1}{\theta^\star}.
    \end{align*}
    Since $\|Q_m-Q^\ast\|\leq 4/m$, and since
    \[
        t\log\frac{4}{t}
    \]
    is increasing for sufficiently small $t$, we obtain, for sufficiently large $m$,
    \begin{align*}
        \|Q_m-Q^\ast\|
        \log
        \frac{4}{\|Q_m-Q^\ast\|}
        \leq
        \frac{4}{m}\log m.
    \end{align*}
    Therefore,
    \begin{equation}\label{eq:continuity of divergence}
        \left|D(Q_m\|P')-D(Q^\ast\|P')\right|
        \leq
        \frac{4}{m}\log m
        +
        \frac{4}{m}\log\frac{1}{\theta^\star}
        \eqcolon \epsilon_m.
    \end{equation}
    Since the above bound is uniform over
    \(P'\in\{\Pxy,\Pxby,\Pbxy,\Pbxby\}\)
    whenever the corresponding divergence is finite, the same
    sequence of types \(Q_m\) approximates all relevant divergences
    simultaneously.
    Thus, for any fixed minimizer \(Q^\ast\), we can choose a
    sequence of joint types \(Q_m\) such that all the relevant
    finite divergences converge to the corresponding divergences
    of \(Q^\ast\).

    Next, we need to check that the approximating type $Q_m$ can be chosen
    from $\mathcal{Q}_m$ for sufficiently large $m$.


    Let $Q_m$ be a type sufficiently close to $Q^\ast$, i.e., $\|Q_m - Q^\ast\| \leq \frac{4}{m}$.
    Then, we choose $Q_m^\pi$ which satisfies \[D(Q_m^\pi\|\Pxy) \leq \min[D(Q_m^\pi\|\Pxby), D(Q_m^\pi\|\Pbxy), D(Q_m^\pi\|\Pbxby)].\]
    Here, $\pi$ is defined as any of the four flipping operations: no flip, flipping $X$, flipping $Y$, or flipping both.
    We can show that $D(Q_m^\pi\|\Pbxby)$ is also sufficiently close to $D(Q^\ast\|\Pbxby)$ as follows:
    \begin{enumerate}
        \item When $D(Q^\ast \| \Pbxby) < D(Q^\ast \| \Pxby)$,
        \begin{align*}
            D(Q_m^\pi\| \Pbxby) &= D(Q_m\| \Pbxby^\pi) \\
            &\overset{(a)}{\leq} D(Q^\ast \| \Pbxby^\pi) + \epsilon_m \\
            &\overset{(b)}{=} D(Q^\ast \| \Pbxby) + \epsilon_m,
        \end{align*}
        where \\
        (a) follows from the continuity of divergence \eqref{eq:continuity of divergence},\\
        (b) follows from the fact that (see \eqref{eq:boundary equality} in Lemma \ref{lem:minimum divergence})\[D(Q^\ast||\Pxy) = D(Q^\ast||\Pbxby)\]
        and $Q_m^\pi$ only takes non-flipped version or both-flipped version of $Q_m$ for sufficiently large $m$
        because $D(Q^\ast \| \Pbxby) < D(Q^\ast \| \Pxby)$.
        
        \item When $D(Q^\ast \| \Pbxby) = D(Q^\ast \| \Pxby)$,
        \begin{align*}
            D(Q_m^\pi\| \Pbxby) &= D(Q_m\| \Pbxby^\pi) \\
            &\overset{(c)}{\leq} D(Q^\ast \| \Pbxby^\pi) + \epsilon_m \\
            &\overset{(d)}{= } D(Q^\ast \| \Pbxby) + \epsilon_m,
        \end{align*}
        where \\
        (c) follows from the continuity of divergence \eqref{eq:continuity of divergence},\\
        (d) follows from the fact that \[D(Q^\ast||\Pxy) =D(Q^\ast||\Pxby)=D(Q^\ast||\Pbxy)=D(Q^\ast||\Pbxby).\]
        See \eqref{eq:boundary equality} and \eqref{eq:boundary equality2} in Lemma \ref{lem:minimum divergence}.
    \end{enumerate}
    Thus, we can choose $Q_m^\pi\in Q_m$ satisfying \eqref{eq:density of types}.
    \end{proof}

%% file: sections/acknowledgment.tex
\section*{Acknowledgment}
The authors would like to thank Prof. Hirosuke Yamamoto for bringing \cite{yamamoto:82,BergerHousewright:79,Housewright:77} and \cite{YamamotoItoh:1980} to their attention 
and for teaching them about the historical background of Berger-Tung coding.

This work was supported in part by the Japan Society for the Promotion of Science (JSPS) KAKENHI under Grant Numbers
23K17455, and 26H02489, and 26K02864, and by JST, CRONOS, Japan Grant Number JPMJCS25N5.

ChatGPT was used to assist with the minimization calculation in Lemma 8. It was also used for language editing and improving the presentation of the manuscript.